\documentclass[a4paper,11pt]{article}

\usepackage[margin=1in]{geometry}

\usepackage{hyperref}
\hypersetup{
	colorlinks=true,
	citecolor = blue!80!black,
	linkcolor = red!80!black
}

\newcommand{\email}[1]{\href{mailto:#1}{\nolinkurl{#1}}}
\usepackage[numbers]{natbib}
\renewcommand{\citet}[1]{\citeauthor{#1}~\citep{#1}}

\usepackage[title]{appendix}
\usepackage{amsmath,amsthm,amsfonts}
\usepackage{microtype}
\usepackage{ifthen}
\usepackage{xcolor}
\usepackage{tikz}
\usepackage{pgfplots}
\usetikzlibrary{calc,intersections}
\usepackage{xspace}
\usepackage[footnotesize]{caption}
\usepackage[ruled,vlined]{algorithm2e}

\theoremstyle{plain}
\newtheorem{theorem}{Theorem}
\newtheorem{lemma}{Lemma}
\newtheorem{proposition}{Proposition}
\newtheorem{corollary}{Corollary}
\theoremstyle{definition}
\newtheorem{definition}{Definition}

\newcommand{\ie}{i.e.,\xspace}
\newcommand{\eg}{e.g.,\xspace}

\def\clap#1{\hbox to 0pt{\hss#1\hss}}

\renewcommand{\vec}[1]{\mathbf{#1}}

\newcommand{\vr}{\vec{r}}
\newcommand{\vs}{\vec{s}}

\newcommand{\vx}{\vec{x}}
\newcommand{\R}{\mathbb{R}}
\newcommand{\N}{\mathbb{N}}
\newcommand{\E}[2][]{\mathbb{E}_{#1}\left[#2\right]}
\newcommand{\EE}{\mathbb{E}}
\renewcommand{\Pr}[1]{\mathrm{Pr}\left[#1\right]}
\newcommand{\PP}{\mathrm{Pr}}

\newcommand{\perm}{\mathcal{S}}

\sloppy

\begin{document}

\title{Prophet Inequalities for I.I.D.\@ Random Variables\\ from an Unknown Distribution}

\author{%
	Jos\'e R.\ Correa\thanks{Departamento de Ingenier\'ia Industrial, Universidad de Chile, Santiago, Chile. Email: \email{correa@uchile.cl}}
	\and 
	Paul D\"utting\thanks{Department of Mathematics, London School of Economics, London, UK. Email: \email{p.d.duetting@lse.ac.uk}.}
	\and
	Felix Fischer\thanks{School of Mathematical Sciences, Queen Mary University of London, London, UK. Email: \email{felix.fischer@qmul.ac.uk}.}
	\and
	Kevin Schewior\thanks{Department of Mathematics and Computer Science, University of Cologne, Cologne, Germany. Email: \email{kschewior@gmail.com}.}}

\date{}

\maketitle

\begin{abstract}
	A central object of study in optimal stopping theory is the single-choice prophet inequality for independent, identically distributed random variables: given a sequence of random variables $X_1,\dots,X_n$ drawn independently from the same distribution, the goal is to choose a stopping time~$\tau$ such that for the maximum value of~$\alpha$ and for all distributions, $\mathbb{E}[X_\tau] \geq \alpha \cdot \mathbb{E}[\max_t X_t]$. What makes this problem challenging is that the decision whether $\tau=t$ may only depend on the values of the random variables $X_1,\dots,X_t$ and on the distribution~$F$. For a long time the best known bound for the problem had been $\alpha \geq 1-1/e \approx 0.632$, but recently a tight bound of $\alpha \approx 0.745$ was obtained.
	The case where $F$ is unknown, such that the decision whether $\tau=t$ may depend only on the values of the random variables $X_1,\dots,X_t$, is equally well motivated but has received much less attention. A straightforward guarantee for this case of $\alpha\geq 1/e\approx 0.368$ can be derived from the well-known optimal solution to the secretary problem, where an arbitrary set of values arrive in random order and the goal is to maximize the probability of selecting the largest value. We show that this bound is in fact tight. We then investigate the case where the stopping time may additionally depend on a limited number of samples from~$F$, and show that even with $o(n)$ samples $\alpha\leq 1/e$. On the other hand, $n$ samples allow for a significant improvement, while $O(n^2)$ samples are equivalent to knowledge of the distribution: specifically, with $n$ samples $\alpha\geq 1-1/e\approx 0.632$ and $\alpha\leq\ln(2)\approx 0.693$, and with $O(n^2)$ samples $\alpha\geq 0.745-\varepsilon$ for any $\varepsilon>0$.
\end{abstract}

\maketitle

\section{Introduction}

The theory of optimal stopping is concerned with sequential decision making given imperfect information about the future, in order to maximize a reward or minimize a cost. Two canonical problems in the field are the \emph{secretary problem} and the \emph{prophet problem}. Both problems have over the past few years also received considerable attention from theoretical computer science and operations research, at least in part due to their relevance to the design of posted-price mechanisms for online sales.

In the \emph{secretary problem} we are given $n$ distinct, non-negative numbers from an unknown range. These numbers are presented in random order, and the goal is to stop at one of these numbers in order to maximize the probability with which we select the maximum. The problem has a surprisingly simple, and surprisingly positive, answer: by discarding a $1/e$ fraction of the numbers, and then selecting the first number that is greater than any of the discarded numbers, one is guaranteed to select the maximum with probability~$1/e$~\citep[\eg][]{GiMo66a}. The guarantee of $1/e$ achieved by this simple stopping rule is best possible, and remains best possible for example when numbers come from a uniform distribution with unknown and randomly chosen endpoints and are therefore correlated random variables~\citep{BeGn84a,Ferg89a}. When numbers are i.i.d.\@ from a known distribution, a better guarantee of around~$0.58$ can be obtained~\citep{GiMo66a}, and this bound is again tight.

In the \emph{prophet problem} we are again shown~$n$ non-negative numbers, one at a time, but now these numbers are independent draws from known distributions and our goal is to maximize the expected value of the number on which we stop relative to the expected maximum value in the sequence. Two central results for this problem concern the case where the distributions are distinct and the case where they are identical. For the former a tight bound of~$1/2$ was given by \citet{KrengelS1973,KrengelS1978} and~\citet{SamuelC84}. For the latter a lower bound of $1-1/e\approx 0.632$ due to \citet{HillK82}, corresponding to a stopping rule with this guarantee, was improved only very recently, first to around~$0.738$ by \citet{AbolhassaniEEHK17} and then to around~$0.745$ by \citet{CorreaFHOV17}. The lower bound of \citet{CorreaFHOV17} is in fact known to be tight due to an impossibility result of \citet{HillK82} and \citet{Kertz86} that implies a matching upper bound.

A natural variant of the prophet problem, for both identical and non-identical distributions, can be obtained if we assume that the distributions from which values are drawn are unknown. Despite its obvious appeal, which was noted for example by \citet{AzarKW14}, precious little is known about this variant.

\paragraph{Our Contribution}
We consider the prophet problem in which values are drawn independently from a \emph{single} \emph{unknown} distribution, and ask which approximation guarantees can be obtained relative to the expected maximum value in hindsight. It is worth pointing out that, in contrast to the case where the distribution is known and an optimal stopping rule can be obtained via backward induction, there is no clear candidate for an optimal stopping rule. The case of identical distributions seems particularly interesting, as here one may hope to be able to learn something about later values from earlier ones.

A guarantee of $1/e$ for the problem can be obtained in a relatively straightforward way from the well-known optimal stopping rule for the secretary problem, see \autoref{thm:secretary} in \autoref{sec:sublinear}. The rule is guaranteed to stop on the maximum value with probability at least $1/e$, and one can show that this implies a $1/e$-approximation relative to the expected maximum. Note that such an analysis, however, does not take into account that all values come from the same distribution and thus ignores any possibility of the aforementioned learning.

We show that no learning of the distribution is possible and that the straightforward guarantee of~$1/e$ is in fact best possible in the prophet setting, see \autoref{thm:upper} in \autoref{sec:sublinear}. The main difficulty in showing an impossibility result of this kind is that the set of stopping rules to which it applies is very rich. We will see, however, that for every stopping rule there exists a set $V\subseteq\N$ of arbitrary size and with an arbitrary gap between the largest and second-largest element on which the stopping rule is what we call value-oblivious: for random variables $X_1,\dots,X_n$ supported on~$V$, the decision to stop at~$X_i$ when $X_i>\max\{X_1,\dots,X_{i-1}\}$ does not depend on the values of the random variables $X_1,\dots,X_i$ but only on whether $X_i$ is the largest among these values. We will then construct a distribution~$F$ with support~$V$ such that~$n$ values drawn independently from~$F$ are pairwise distinct with probability arbitrarily close to one and the expectation of their maximum is dominated by the largest value in~$V$. The objective of the prophet problem on~$F$ is thus identical, up to a small error, to that of the secretary problem, and any stopping rule with a guarantee better than~$1/e$ for the former would yield such a stopping rule for the latter. To understand why stopping rules must be value-oblivious it is useful to consider the special case where $n=2$. In this case we may focus on rules that always stop at $X_2$ whenever they have not stopped at $X_1$, and every such stopping rule can be described by a function $p:\R\to[0,1]$ such that $p(x)$ is the probability of stopping at~$X_1$ when $X_1=x$. By the Bolzano--Weierstrass theorem the infinite sequence $(p(n))_{n\in\N}$ contains a monotone subsequence and thus, for some $q\in[0,1]$ and every $\varepsilon>0$, a subsequence of values contained in the interval $[q-\varepsilon,q+\varepsilon]$. For random variables that only take values in the index set of that latter subsequence, the stopping rule will therefore stop at the first random variable with what is essentially a fixed probability. When $n>2$ the set of possible stopping rules becomes much richer, and identifying a set~$V$ on which a particular stopping rule is value-oblivious becomes much more challenging. Rather than the Bolzano--Weierstrass theorem, our proof uses the infinite version of Ramsey's theorem~\citep{Rams30a} to establish the existence of such a set.

\begin{figure}
\centering\hspace*{-2em}
\begin{tikzpicture}[yscale=1.2]
	\draw[color=black!10,fill=black!10] (2.5,-0.5) rectangle (3.5,3.9);	
	\draw[-latex,very thick] (0,-0.5) -- (8.5,-0.5);
	\draw[-,very thick] (0,-0.5) -- (0,-0.2);
	\draw[-,very thick] (-0.2,-0.1) -- (0.2,-0.1);
	\node[below] at (1.25,-0.55) {\footnotesize $o(n)$};
	\node[below left] at (2.8,-0.64) {\footnotesize $\smash\varepsilon\, n$};
	\node[below right] at (3.3,-0.64) {\footnotesize $n$};
	\node[below] at (5.35,-0.56) {\footnotesize $\Omega(n)$};
	\node[below] at (7.75,-0.51) {\footnotesize $\Theta(n^2)$};
	\draw[-latex,very thick] (0,0) -- (0,4);
	\draw[-,very thick] (-0.05,0.45) node[left] {\footnotesize $1/e$} -- (0.05,0.45) ;
	\draw[-,very thick,dashed] (0,0.47) to node[above,color=black]{\footnotesize \autoref{thm:upper}} (2.5,0.47) to (3.1,2) to [bend left=5] (3.8,3.35) to node[above,color=black]{\footnotesize \citet{HillK82,Kertz86}} (7.7,3.35);
	\draw[-,very thick] (0,0.43) -- node[below,color=black]{\footnotesize \autoref{thm:secretary}} (2.665,0.43) -- (3.5,2.5);
	\draw[-,very thick] (-0.05,2.9) node[left] {\footnotesize $\ln(2)$} -- (0.05,2.9); 
	\node[left,color=black] at (3.3,2.9) {\footnotesize \autoref{thm:upper_par}};
	\draw[-,very thick] (-0.05,2.5) node[left] {\footnotesize $1-1/e$} -- (0.05,2.5); 
	\node[circle,fill=black,minimum size=3pt,inner sep=0pt] at (3.5,2.5) {};
	\node[right,color=black] at (3.5,2.7) {\footnotesize \autoref{thm:fresh-looking-samples}};
	\node[right,color=black] at (3.1,1.5) {\footnotesize \autoref{cor:parametric-lower}};	
	\draw[-,very thick] (3.5,2.5) -- (3.7,2.5);
	\draw[very thick,dotted] (3.6,2.5) -- (4,2.5);	
	\draw[-,very thick] (-0.05,3.35) node[left] {\footnotesize $0.745$} -- (0.05,3.35); 
	\node[circle,fill=black,minimum size=3pt,inner sep=0pt] at (7.75,3.35) {};
	\node[below] at (7.75,3.35) {\footnotesize \autoref{thm:quadratic}};
\end{tikzpicture}
\caption{Overview of results. The number of samples is displayed along the horizontal axis, the performance guarantee along the vertical axis. Lower bounds, shown as a solid line and two dots, result from stopping rules with a certain performance guarantee. Upper bounds, shown as dashed lines, correspond to impossibility results that no stopping rule can improve upon. The results for $o(n)$ and $\Theta(n^2)$ samples are tight. With the exception of the upper bound of approximately $0.745$, all results are new to this paper.}
\end{figure}

Motivated by this impossibility result we then turn to the case where the stopping rule has access to a limited number of additional samples from the distribution, which it may use in deciding when to stop. An extension of our impossibility result shows that $o(n)$ samples are not enough to improve on the bound of~$1/e$. The interesting case therefore is the one with $\Omega(n)$ samples, and we show that a simple stopping rule achieves a guarantee of $1-1/e\approx 0.632$ with $n-1$ samples, see \autoref{thm:fresh-looking-samples} in \autoref{sec:linear}. 
The rule starts by drawing $n-1$ samples. Then, when considering the $i$th random variable for $i \geq 1$, it also considers a random subset of size $n-1$ drawn uniformly from the $n-1$ initial samples and the $i-1$ random variables seen so far. If the $i$th random variable is greater than the maximum of that random subset the rule stops, otherwise it continues with the next random variable. While the stopping rule itself is easy to describe, its analysis relies on an insight that is somewhat subtle. Indeed, each of the sets of random variables used to set a threshold for acceptance is distributed like a set of $n-1$ fresh samples from the distribution. The expected value collected from each random variable, conditioned on its acceptance, thus equals the expected maximum value of $n$ independent draws from the distribution, and the probability of accepting a random variable conditioned on reaching it is exactly $1/n$. The approximation guarantee is then equal to the overall probability of stopping, which is at least $1-1/e$. By a straightforward extension, \autoref{cor:parametric-lower} in \autoref{sec:linear}, we obtain a lower bound of $\frac{1+\gamma}{2}(1-1/e)$ on the guarantee achievable with $\gamma\, n$ samples for any $\gamma\in[0,1]$.

We complement the lower bound of $1-1/e$ with matching upper bounds for two different classes of stopping rules that share specific properties of the stopping rule described above. These bounds limit the types of approaches that could conceivably be used to go beyond a performance guarantee of $1-1/e$. We then give a parametric upper bound that applies to any stopping rule with access to $\gamma\, n$ samples for $\gamma\geq 0$, see \autoref{thm:upper_par} in \autoref{sec:linear}. For rules that use at most $n$ samples this upper bound is equal to $\ln(2)\approx 0.693$ and thus nearly tight.

We finally show that $O(n^2)$ samples are enough to get arbitrarily close to the optimal guarantee of around $0.745$ attainable when the distribution is known, see \autoref{thm:quadratic} in \autoref{sec:745}. This is achieved by mimicking the stopping rule that attains that bound, which uses a decreasing sequence of thresholds corresponding to conditional acceptance probabilities that increase over time, but using quantiles of the empirical distribution rather than the actual one. By discarding a constant initial fraction of the values, and using the inequality of \citet*{DKW} to show simultaneous concentration of all empirical quantiles, we reduce the number of required samples from $O(n^4)$ to $O(n^2)$ relative to the obvious approach that potentially stops on any of the values and uses Chernoff and union bounds to show concentration.

Taken together our results reveal a phase transition from secretary-like to prophet-like behavior when going from $o(n)$ to $\Omega(n)$ samples, and show that $O(n^2)$ samples are equivalent to full knowledge of the distribution.

\paragraph{Follow-Up Work} 
\citet{RubinsteinWW20} subsequently showed that a guarantee arbitrarily close to~$0.745$ can in fact already be achieved with $O(n)$ samples. \citet{KaplanNR20} and \citet{CorreaCES20} studied generalizations of the problem we consider here. The results of \citeauthor{KaplanNR20} imply a lower bound for our problem with $\gamma n$ samples of $e^{-e^{-\gamma}}$ when $0\leq\gamma\leq 0.567$ and of around $\gamma (1-\gamma-e^{-\gamma})$ when $0.567\leq\gamma\leq 1$. This improves on our lower bound when $\gamma<1$ and matches our lower bound when $\gamma=1$. The results of \citeauthor{CorreaCES20}, on the other hand, imply an improved lower bound of $0.635$ when $\gamma=1$, \ie for a situation with $n$ samples.

\paragraph{Further Related Work}
For early work on the classic single-choice prophet inequality in mathematics the reader is referred to a survey of \citet{HillK92}. Starting from work of \citet{HajiaghayiKS07}, the prophet problem and extensions to richer feasibility conditions have seen a surge of interest in theoretical computer science. This has produced prophet inequalities for matroids and polymatroids~\citep[\eg][]{ChawlaHMS10,KlWe12,Alaei14,DuettingK15,FeldmanSZ16,RubinsteinS17,AnariNSS19}, settings where feasible solutions are given by an arbitrary downward-closed set system~\citep[\eg][]{Rubinstein16,RubinsteinS17}, matching problems~\citep[\eg][]{ChawlaHMS10,KlWe12,GravinW19,EzraFGT20}, knapsack constraints~\citep[\eg][]{FeldmanSZ16,DuettingFKL17}, resource allocation problems involving intervals and paths~\citep[\eg][]{CMT19a}, and combinatorial auctions~\citep[\eg][]{FeldmanGL15,DuettingFKL17,DuettingKL20}.

There also exists a relatively small but important body of prior work on the case of unknown distributions. Most relevant for us is the aforementioned work by \citet{AzarKW14}, which focuses on richer feasibility structures such as matchings and matroids, and work by \citet{BabaioffBDS17}, who consider a setting similar to ours but focus on a different objective, revenue maximization, apply different techniques, and obtain results that are qualitatively different from ours. Independently from our work, \citet{RubinsteinWW20} studied the case of random variables that are drawn independently from unknown \emph{non-identical} distributions, and showed that a single sample from each distribution is enough to achieve a guarantee of $1/2$, which matches the best possible guarantee that can be achieved when the distributions are known.

In another variant of the prophet problem, the so-called prophet secretary problem, random variables are drawn from known non-identical distributions and observed in random order. The goal is again to immediately and irrevocably choose random variables with large value. \citet{EHLM15a} gave a lower bound of $1-1/e$ for the version of the problem where a single random variable must be chosen, which was subsequently improved to $1-1/e+1/400$ by \citet{ACK18a}. The best bounds currently known for the single-choice version are a lower bound of $1-1/e+1/27\approx 0.669$ and an upper bound of $\sqrt{3}-1\approx 0.732$ due to \citet{CSZ19a}. Combinatorial versions of the prophet secretary problem have been studied by \citet{EhsaniHKS18}. The aforementioned bounds of \citet{KaplanNR20} and \citet{CorreaCES20} also apply to a single-choice prophet secretary problem with unknown distributions and additional samples.

Optimal stopping with unknown distributions has also been studied in operations research and management science, but the types of problems, objectives, and techniques differ significantly from ours and typically involve regret minimization, see, \eg the results of \citet{GoZe17a} and the recent survey of \citet{Boer15}. The literature in operations research and management science moreover contains results on a broad range of stochastic optimization problems, which share certain features of the basic prophet inequality problem and its combinatorial extensions. These problems include constrained Bayesian online selection~\citep[\eg][]{AnariNSS19}, Bayesian assortment optimization~\citep[\eg][]{FengEtAl19}, and certain models of network revenue management~\citep[\eg][]{alijani2020predict}. To the best of our knowledge, they have not been studied from the perspective of sample complexity.

A final line of related work, in economics and theoretical computer science, has studied posted pricing and prophet inequalities with inaccurate priors~\citep{BergemannS08,BergemannS11,DuettingK19}. This line of work assumes access to prior distributions that are close in terms of some metric to the actual distributions, and seeks either max-min optimal mechanisms or performance guarantees that are parametrized by the distance between the assumed and actual priors.

\section{Preliminaries}

Denote by $\N$ the set of positive integers and let $\N_0=\N\cup\{0\}$. For $i\in\N$, let $[i]=\{1,\dots,i\}$ and denote by~$\perm_i$ the set of permutations of~$[i]$.

Let $k\in\N_0$ and $n\in\N$. We consider $(k,n)$-stopping rules that sequentially observe random variables $X_1, \dots, X_n$ and have access to samples $S_1, \dots, S_k$, and for each $i = 1, \dots, n$ decide whether to stop on $X_i$ based on the values of $X_1, \dots, X_i$ and $S_1, \dots, S_k$. We assume that $X_1, \dots, X_n$ and $S_1, \dots, S_k$ are independent and identically distributed, and respectively denote by~$f$ and~$F$ the probability density function and cumulative distribution function of their distribution. 
Formally, a \emph{$(k,n)$-stopping rule} $\vr$ is a family of functions $r_1,\dots,r_n$ where $r_i:\R_+^{k+i}\rightarrow[0,1]$ for all $i=1,\dots,n$. Here, $r_i(s_1\dots,s_k,x_1,\dots,x_i)$ for $\vs\in\R_+^k$ and $\vx\in\R_+^n$ is the probability of stopping at $X_i$ conditioned on having received $S_1=s_1\dots,S=s_k$ as samples and $X_1=x_1,\dots,X_i=x_i$ as values and not having stopped on any of $X_1,\dots,X_{i-1}$. The \emph{stopping time} $\tau$ of a $(k,n)$-stopping rule $\vr$, given $S_1,\dots,S_k$ and $X_1,\dots,X_n$, is thus the random variable with support $\{1,\dots,n\}\cup\{\infty\}$ such that for all $\vs\in\R_+^k$ and $\vx\in\R_+^n$,
\begin{multline*}
	\Pr{\tau=i\mid S_1=s_1,\dots,S_k=s_k,X_1=x_1,\dots,X_n=x_n} = \\ \Biggl(\prod_{j=1}^{i-1} \bigl(1-r_j(s_1,\dots,s_k,x_1,\dots,x_j)\bigr)\Biggr)
	\cdot r_i(s_1\dots,s_k,x_1,\dots,x_i) .
\end{multline*}

For a given stopping rule we will be interested in the expected value $\E{X_\tau}$ of the variable at which it stops, where we use the convention that $X_{\infty}=0$, and will measure its performance relative to the expected maximum $\E{\max\{X_1,\dots,X_n\}}$ of the random variables $X_1,\dots,X_n$. We will say that a stopping rule achieves an approximation guarantee~$\alpha$, for $\alpha\leq 1$, if for any distribution, $\E{X_\tau}\geq\alpha\,\E{\max\{X_1,\dots,X_n\}}$.

For ease of exposition we will assume continuity of~$F$ in proving lower bounds and mainly use discrete distributions to prove upper bounds. All results can be shown to hold in general by standard arguments, to break ties among random variables and to approximate a discrete distribution by a continuous one.

\section{Sublinear Number of Samples}
\label{sec:sublinear}

We begin by showing that for $o(n)$ samples, the prophet problem with an unknown distribution behaves like the secretary problem. As we will see in \autoref{sec:secretary}, a straightforward baseline can be obtained from the optimal solution to the secretary problem, which discards a $1/e$ fraction of the values and then accepts the first value that exceeds the maximum of the discarded values. The algorithm does not require any samples, is guaranteed to stop at the maximum of the sequence with probability $1/e$, and can be shown to also provide a $1/e$ approximation for our objective. Our main result in this section, which we prove in \autoref{sec:impossibility}, shows that the bound of $1/e$ is in fact best possible. This results continues to hold with $o(n)$ samples.

\subsection{A $\mathbf{1/e}$-Approximation Without Samples}
\label{sec:secretary}

The following result translates the guarantee of $1/e$ for the secretary problem to a prophet inequality for independent random variables from an unknown distribution.
\begin{theorem}  \label{thm:secretary}
	Let $X_1,X_2,\dots,X_n$ be i.i.d.\@ random variables drawn from an unknown distribution $F$.
	Then there exists a $(0,n)$-stopping rule with stopping time $\tau$ such that
	\begin{align*}
		\E{X_\tau} \geq \frac{1}{e} \cdot \E{\max\{X_1,X_2,\dots,X_n\}} .
	\end{align*}
\end{theorem}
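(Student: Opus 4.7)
The plan is to analyze the classical secretary algorithm---discard the first $\lfloor n/e\rfloor$ random variables and then stop at the first subsequent variable whose value exceeds the maximum of the discarded ones---and show that, although it is designed to maximize the probability of selecting the maximum rather than the expected selected value, it already yields a $1/e$-approximation for the prophet objective. Since the algorithm looks only at $X_1,\dots,X_n$ and uses no auxiliary samples, it is a $(0,n)$-stopping rule.

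The key observation is that, because $X_1,\dots,X_n$ are i.i.d., conditional on the unordered multiset $V=\{X_1,\dots,X_n\}$ of realized values the arrival order is uniformly distributed over $\perm_n$. The secretary decision rule depends only on the relative ranks of the values observed so far, not on their magnitudes, so the event $\{X_\tau=\max V\}$ is a function solely of this uniformly random permutation. Hence, conditional on $V$, the classical secretary analysis gives $\Pr{X_\tau=\max V\bmid V}\geq 1/e$, and this bound holds regardless of what the values in $V$ happen to be. The continuity assumption on $F$ noted in the preliminaries ensures that the values in $V$ are almost surely pairwise distinct, which is what the secretary analysis requires.

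From here the bound follows in one line: since $X_\tau\geq 0$ and $X_\tau\cdot\mathbf{1}[X_\tau=\max V]=(\max V)\cdot\mathbf{1}[X_\tau=\max V]$,
\begin{align*}
	\E{X_\tau\bmid V}\;\geq\;\E{X_\tau\cdot\mathbf{1}[X_\tau=\max V]\bmid V}\;=\;(\max V)\cdot\Pr{X_\tau=\max V\bmid V}\;\geq\;\tfrac{1}{e}\max V,
\end{align*}
and taking expectations over $V$ yields $\E{X_\tau}\geq(1/e)\,\E{\max\{X_1,\dots,X_n\}}$. There is no real obstacle: the whole argument rests on the clean conditional-independence observation that under i.i.d.\ draws the secretary algorithm's success event is a function only of the arrival permutation, which is independent of the realized multiset of values. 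Extension to arbitrary (possibly non-continuous) $F$ is handled by the standard tie-breaking reduction referenced in the preliminaries.
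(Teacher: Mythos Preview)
Your proof is correct and follows essentially the same approach as the paper: both condition on the realized values (you phrase this as conditioning on the multiset $V$, the paper writes it as an integral over $v_1,\dots,v_n$), use that the arrival order is then a uniform permutation so the secretary rule picks the maximum with probability at least $1/e$, and lower-bound $X_\tau$ by $(\max V)\cdot\mathbf{1}[X_\tau=\max V]$ via nonnegativity. The paper's version is the integral-form paraphrase of your conditional-expectation argument.
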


The result can be shown in a straightforward way, based on the idea that the realizations of the random variables $X_1,\dots,X_n$ can be obtained by drawing~$n$ values from their common distribution and then permuting them uniformly at random. The classic analysis of the secretary problem~\citep{Ferg89a} implies that \emph{for each realization of the~$n$ draws}, the optimal stopping rule for this problem obtains the maximum value with probability $1/e$. It thus also obtains at least a $1/e$ fraction of the expected value of this maximum. We formalize this idea and prove \autoref{thm:secretary} in \autoref{app:secretary}.

\subsection{A Matching Upper Bound}
\label{sec:impossibility}

We next show that it is impossible to improve on the straightforward lower bound of~$1/e$.
\begin{theorem}
	\label{thm:upper}
	Let $\delta>0$. Then there exists $n_0\in\N$ such that for any $n\geq n_0$ and any $(0,n)$-stopping rule with stopping time $\tau$ there exists a distribution $F$, not known to the stopping rule, such that when $X_1,\dots,X_n$ are i.i.d.\@ random variables drawn from $F$,
	\begin{align*}
		\EE[X_\tau] \leq \bigg(\frac{1}{e}+\delta\bigg) \cdot \EE[\max\{X_1,\dots,X_n\}]_{}.
	\end{align*}
\end{theorem}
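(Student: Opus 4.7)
The plan is to make rigorous the argument sketched in the introduction: find a set $V$ on which the given stopping rule is approximately value-oblivious, construct a distribution supported on $V$ for which the prophet problem essentially degenerates into the secretary problem, and then invoke the classical $1/e$ upper bound for the secretary problem.

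\textbf{Step 1 (Ramsey).} Fix a discretization parameter $\eps>0$, to be chosen later as a function of $\delta$, and partition $[0,1]$ into $\lceil 1/\eps\rceil$ buckets of width at most $\eps$. For each $i\in[n]$, color the $i$-subsets of $\N$ as follows: the color of $\{w_1<\cdots<w_i\}$ records, for every $\sigma\in\perm_i$, the bucket containing $r_i(w_{\sigma(1)},\ldots,w_{\sigma(i)})$. Since each coloring uses only finitely many colors, iterated application of the infinite version of Ramsey's theorem (for $i=n,n-1,\ldots,1$) produces an infinite set $W\subseteq\N$ that is monochromatic for all $n$ colorings simultaneously. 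On any permutation of any $i$-subset of $W$, the value of $r_i$ then depends only on the permutation, up to additive error $\eps$.

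\textbf{Step 2 (Distribution).} Select $N=\Theta(n^2/\delta)$ integers $v_1<\cdots<v_{N-1}<v_N$ from $W$ with the additional property that $v_N$ is much larger than $v_{N-1}$, specifically $v_N\geq (N/\delta)\,v_{N-1}$; this is possible because $W$ is infinite. Let $V=\{v_1,\ldots,v_N\}$ and let $F$ assign probability $p:=\delta/n$ to $v_N$ and probability $(1-p)/(N-1)$ to each other element. Standard birthday-style estimates show that with probability at least $1-O(\delta)$ the draws $X_1,\ldots,X_n$ are pairwise distinct and $v_N$ appears at most once; moreover $\E{\max\{X_1,\ldots,X_n\}} = v_N\cdot \Pr{v_N\text{ appears}}\cdot(1-O(\delta))$.

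\textbf{Step 3 (Reduction to secretary).} Since $X_\tau\leq v_{N-1}$ whenever $X_\tau\neq v_N$, we have $\E{X_\tau}\leq v_N\cdot \Pr{X_\tau = v_N} + v_{N-1}$. To bound $\Pr{X_\tau=v_N}$ I introduce an auxiliary rule $\vr'$ that agrees with $\vr$ at record positions (those $i$ with $X_i>\max\{X_1,\ldots,X_{i-1}\}$) but never stops at non-records. Stopping early at a non-record can only decrease the chance of later reaching $v_N$, hence $\Pr{X_\tau=v_N}$ is bounded by the corresponding probability under $\vr'$. Conditional on the high-probability event $A$ that the samples are distinct and $v_N$ appears exactly once at some position $I$, the sequence is a uniformly random permutation of $n$ distinct values of which $v_N$ is the maximum; by the Ramsey property, $\vr'$ restricted to this event coincides, up to total error at most $n\eps$, with a genuine ordinal stopping rule on a uniform random permutation. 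The classical secretary upper bound (applicable to any ordinal rule as $n\to\infty$) then yields $\Pr{\tau_{\vr'}=I\mid A}\leq 1/e + o_n(1) + n\eps$. Plugging these estimates into the ratio $\E{X_\tau}/\E{\max\{X_1,\ldots,X_n\}}$ and taking $\eps=\delta/(2n)$ together with $n_0$ large enough to absorb the $o_n(1)$ term gives the desired $\E{X_\tau}\leq(1/e+\delta)\cdot\E{\max\{X_1,\ldots,X_n\}}$.

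\textbf{Main obstacle.} The main technical challenge is to execute the Ramsey construction and the subsequent reduction cleanly enough that all error terms can be controlled at once: one must argue that the $\eps$-approximate value-obliviousness on $V$ translates to an $n\eps$-approximate ordinal stopping rule on the induced random permutation, while the birthday-style collision error, the non-record contribution controlled by $v_{N-1}/v_N$, and the asymptotic $o_n(1)$ gap in the secretary impossibility can all be made smaller than $\delta$ by a suitable choice of $\eps$, $N$, the gap $v_N/v_{N-1}$, and $n_0$.
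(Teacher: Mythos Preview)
Your proposal is correct and follows the same overall arc as the paper---Ramsey, then a needle-in-a-haystack distribution, then reduction to the secretary bound---but the Ramsey step is organized differently.

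The paper first passes to an \emph{order-oblivious} rule (\lemref{lem:structure2}), and then applies Ramsey's theorem in two rounds to obtain $\eps$-value-obliviousness in the sense of Definition~\ref{def:val-obl}: at record positions the stopping probability is, up to $\eps$, a single number $q_i$ depending only on $i$. Your Step~1 skips the order-oblivious reduction and instead uses a single Ramsey pass with a much richer coloring (one bucket per permutation in $\perm_i$, so $\lceil 1/\eps\rceil^{i!}$ colors). This yields the stronger conclusion that on $W$ the rule is approximately a function of the full order pattern, not just approximately constant at records. Your auxiliary rule $\vr'$ and the coupling in Step~3 then land you at a genuine ordinal rule that may depend on the entire rank history, whereas the paper's $\hat{\vr}$ depends only on $i$; the classical secretary upper bound covers both cases.

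What each buys: your route is shorter---one Ramsey invocation instead of an order-obliviousness lemma plus two nested Ramsey arguments---at the cost of an astronomically larger (but still finite) color set. The paper's route isolates a reusable structural lemma (\lemref{lem:structure1}), which it later reapplies verbatim for the $(\gamma n,n)$ upper bound of \thmref{thm:upper_par}; your one-shot coloring would need to be redone there. Your explicit domination step $\Pr{X_\tau=v_N}\le\Pr{X_{\tau'}=v_N}$ via the record-only rule $\vr'$ is a clean way to dispose of non-record stops; the paper handles this implicitly inside its coupling with $\hat{\vr}$.
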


The main difficulty in showing an impossibility result of this kind is that it applies to the set of all possible $(0,n)$-stopping rules, which a priori is very rich. Indeed, recall that a $(0,n)$-stopping rule $\vr$ can be \emph{any} family of functions $r_1,\dots,r_n$ where $r_i:\R_+^{i}\rightarrow[0,1]$ for all $i=1,\dots,n$. 
Our main structural insight will be that we can restrict attention to a much simpler class of stopping rules~$\vr$ that are in a certain sense oblivious to the \emph{values} of the random variables they observe. For random variables $X_1,\dots,X_n$ supported on arbitrarily large sets $V\subseteq\N$, under the condition that $X_1,\dots,X_i$ are pairwise distinct and $X_i>\max\{X_1,\dots,X_{i-1}\}$, and up to an arbitrarily small error~$\varepsilon$, the probability that~$\vr$ stops on $X_i$ will not depend on the values of any of the random variables $X_1,\dots,X_{i}$.
This is made precise by the following definition. Although it is not needed for proving \autoref{thm:upper}, we will consider the more general case of $(k,n)$-stopping rules for any $k\in\N_0$. The structural result extends easily to the more general case, and we will use it later to generalize \autoref{thm:upper}.
\begin{definition}\label{def:val-obl}
Let $\varepsilon>0$, $k\in\N_0$, and $V\subseteq\N$. A $(k,n)$-stopping rule $\vr$ is \emph{$\varepsilon$-value-oblivious on $V$} if, for all $i\in[n]$, there exists $q_i\in[0,1]$ such that, for all pairwise distinct $s_1,\dots,s_k,v_1,\dots,v_i\in V$ with $v_i>\max\{s_1,\dots,s_k,v_1,\dots,v_{i-1}\}$, it holds that $r_i(s_1,\dots,s_k,v_1,\dots,v_i)\in[q_i-\varepsilon,q_i+\varepsilon)$.
\end{definition}

While value-obliviousness significantly restricts the expressiveness of a stopping rule, this restriction turns out to be essentially without loss when it comes to the ability of achieving a certain guarantee across all possible distributions: for any stopping rule and any $\varepsilon>0$, there exists a stopping rule with the same guarantee that is $\varepsilon$-value-oblivious for some infinite set $V\subseteq\N$. This is made precise by the following lemma, which we prove in \autoref{sec:structural}.

\begin{lemma}
	\label{lem:structure1}
	Let $\varepsilon>0$ and $k\in\N_0$. If there exists a $(k,n)$-stopping rule with guarantee $\alpha$, then there exists a $(k,n)$-stopping rule~$\vr$ with guarantee $\alpha$ and an infinite set $V\subseteq\N$ such that $\vr$ is $\varepsilon$-value-oblivious on~$V$.
\end{lemma}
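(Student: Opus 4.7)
The plan is to combine an iterated application of the infinite Ramsey theorem with a pointwise symmetrization of the stopping rule. Ramsey gives an infinite subset $V \subseteq \N$ on which the values of each $r_i$ are tightly discretized across all valid input orderings, and symmetrization collapses the remaining freedom across orderings into a single $q_i$ per step.

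More concretely, I would first fix $K := \lceil 1/\eps \rceil$ and partition $[0,1]$ into $K$ half-open intervals $J_1, \dots, J_K$, each of width at most $\eps$. Setting $V_0 := \N$, I then iterate for $i = 1, \dots, n$: on the previously obtained infinite set $V_{i-1}$, color each $i$-element subset $\{w_1 < w_2 < \dots < w_i\}$ by the vector $(k_\sigma)_{\sigma \in \perm_{i-1}} \in [K]^{(i-1)!}$ recording, for each $\sigma$, the index of the interval containing $r_i(w_{\sigma(1)}, \dots, w_{\sigma(i-1)}, w_i)$. Since the color space $[K]^{(i-1)!}$ is finite, the infinite Ramsey theorem produces an infinite monochromatic $V_i \subseteq V_{i-1}$. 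Taking $V := V_n$, one has: for every $i \in [n]$ and every $\sigma \in \perm_{i-1}$, there is a single interval $J_{k(i,\sigma)}$ that contains $r_i(\cdot)$ evaluated at any valid tuple with entries in $V$ and ordering $\sigma$.

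Different orderings may still land in different intervals, so $\vr$ itself need not be $\eps$-value-oblivious. To remedy this, I replace $\vr$ by its pointwise symmetrization
\[
r'_i(v_1, \dots, v_i) := \frac{1}{(i-1)!} \sum_{\sigma \in \perm_{i-1}} r_i(v_{\sigma(1)}, \dots, v_{\sigma(i-1)}, v_i).
\]
For any valid input with entries in $V$, $r'_i$ is a convex combination of values lying in the fixed intervals $\{J_{k(i,\sigma)}\}_\sigma$; their Minkowski average is itself an interval of width at most $\eps$, so taking $q_i$ to be the midpoint of that average yields $r'_i(v_1, \dots, v_i) \in [q_i - \eps, q_i + \eps)$. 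Thus the symmetrized rule $\vr'$ is $\eps$-value-oblivious on $V$.

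The hard part will be showing that $\vr'$ still has guarantee $\alpha$. Pointwise symmetrization of $r_i$ does not in general preserve the expected reward on an individual distribution, because the cumulative non-stopping probability $A_i := \prod_{j<i}(1 - r_j(X_1, \dots, X_j))$ depends on the ordering of $X_1, \dots, X_{i-1}$ and interacts nontrivially with the permutations we average over. The argument must exploit the exchangeability of the i.i.d.\ inputs: by coupling the process under $\vr'$ with a version of the process under $\vr$ run on a uniformly random relabelling of the same inputs and using that the joint law of $(X_1, \dots, X_n)$ is permutation-invariant, one rewrites $\EE[X_{\tau(\vr')}]$ as a suitable average of $\EE[X_{\tau(\vr)}]$-like quantities, so that the worst-case ratio $\EE[X_\tau]/\EE[\max_i X_i]$ over distributions $F$ is preserved. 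Making this coupling rigorous across all $n$ steps is the principal technical step of the proof.
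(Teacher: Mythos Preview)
Your Ramsey step is correct, and in fact tidier than the paper's two-stage ``weakly/strongly'' construction: by coloring $i$-subsets with the full vector $(k_\sigma)_{\sigma\in\perm_{i-1}}$ you get a single application of Ramsey per step rather than the paper's nested induction. The problem is the symmetrization step.

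Your proposed rule $r'_i(v_1,\dots,v_i)=\frac{1}{(i-1)!}\sum_\sigma r_i(v_{\sigma(1)},\dots,v_{\sigma(i-1)},v_i)$ does \emph{not} preserve the guarantee, and the coupling you sketch does not establish it. Running $\vr$ on a uniformly random relabelling of the inputs means fixing one permutation $\pi\in\perm_n$ at the outset and using it consistently across all steps; by exchangeability this indeed has the same expectation as $\vr$. But $\vr'$ corresponds to drawing an \emph{independent} permutation $\sigma_i\in\perm_{i-1}$ at each step~$i$. These processes differ because the survival factor $\prod_{j<i}(1-r_j(X_{\sigma_j(1)},\dots,X_{\sigma_j(j-1)},X_j))$ does not factor through the per-step averaging. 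A concrete counterexample with $n=3$: take $r_1\equiv 0$, $r_2(v_1,v_2)=\mathbf{1}[v_1>c]$, $r_3(v_1,v_2,v_3)=\mathbf{1}[v_1\le c]$. The original rule always stops and collects $\E[X]$; your symmetrized $r'_3(v_1,v_2,v_3)=\tfrac12(\mathbf{1}[v_1\le c]+\mathbf{1}[v_2\le c])$ yields $\E[X_{\tau'}]=\bigl(1-\tfrac12 p(1-p)\bigr)\E[X]<\E[X_\tau]$ for every distribution with $p:=\Pr[X>c]\in(0,1)$. So the uniform average can strictly decrease the expected reward on every nondegenerate distribution, and hence the guarantee.

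The paper avoids this by reversing the order of the two steps. It first replaces $\vr$ by an \emph{order-oblivious} rule with exactly the same guarantee (\lemref{lem:structure2}); the correct symmetrization is not the uniform average but the conditional stopping probability $\Pr[\tau=i\mid \tau\text{ arrives at }[v_1,\dots,v_i]_{\sim_i}]$, which is a survival-probability--weighted average of the $r_i(v_{\sigma(1)},\dots,v_{\sigma(i-1)},v_i)$. The induction in the proof of \lemref{lem:structure2} shows that this particular weighting preserves $\E[X_\tau]$ on every distribution. Once the rule is order-oblivious, the Ramsey coloring needs only $\lfloor 1/\eps\rfloor+1$ colors rather than $K^{(i-1)!}$, and monochromaticity gives $(\eps,i)$-value-obliviousness directly. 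If you want to keep your order (Ramsey first, symmetrize second), you must use this weighted symmetrization rather than the uniform one---but then the weights depend on $v_1,\dots,v_{i-1}$, your Minkowski-average argument no longer pins $r'_i$ to a width-$\eps$ interval, and you would need a further Ramsey pass. Doing symmetrization first is strictly simpler.
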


With \autoref{lem:structure1} at hand it is not difficult to prove \autoref{thm:upper}. For any $(0,n)$-stopping rule and an appropriate value of $\varepsilon$, we identify a $(0,n)$-stopping rule~$\vr$ with the same performance guarantee that is $\varepsilon$-value-oblivious on an infinite set $V\subseteq \N$. We then define a distribution~$F$ with finite support~$S \subseteq V$ such that (i)~there is a large gap between the largest and second-largest elements of~$S$, (ii)~$n$ independent draws from~$F$ are pairwise distinct with probability close to~$1$, (iii)~$\vr$ is $\varepsilon$-value-oblivious on $S$, and (iv)~the performance guarantee of~$\vr$ on the distribution is dominated by the probability of selecting the largest element of~$S$. By~(i) and~(ii) the prophet problem for the unknown distribution~$F$ is then equivalent up to a small error to a secretary problem, and by~(iii) and~(iv)~$\vr$ behaves on~$F$ essentially like a stopping rule for the secretary problem. A performance guarantee for~$\vr$ of more than $1/e$ would thus contradict the optimality of this bound for the secretary problem.
\begin{proof}[Proof of \autoref{thm:upper}]
	It suffices to show that the guarantee of any $(0,n)$-stopping rule is bounded from above by $1/e+o(1)$, where implicitly $n\rightarrow\infty$.
	To this end consider an arbitrary $(0,n)$-stopping rule with guarantee $\alpha$. Let $\varepsilon=1/n^2$. 
	By \autoref{lem:structure1} there then exists a $(0,n)$-stopping rule~$\vr$ with guarantee~$\alpha$ and an infinite set $V\subseteq\N$ on which~$\vr$ is $\varepsilon$-value-oblivious. Denote by $\tau$ the stopping time of~$\vr$. Let $v_1,\dots,v_{n^3},u\in V$ be pairwise distinct such that $u\geq n^3\max\{v_1,\dots,v_{n^3}\}$. For each $i\in[n]$, let
\[
	X_i = \begin{cases}
		v_1 & \text{with probability $\frac{1}{n^3}(1-\frac{1}{n^2})$,} \\
		\vdots & \\
		v_{n^3} &\text{with probability $\frac{1}{n^3}(1-\frac{1}{n^2})$,} \\
		u &\text{with probability $\frac{1}{n^2}$.}
	\end{cases}
\]

We proceed to bound $\EE[\max\{X_1,\dots,X_n\}]$ from below and $\EE[X_\tau]$ from above. For $i\in[n]$, let $X_{(i)}$ denote the $i$th order statistic of $X_1,\dots,X_n$, such that $X_{(n)}=\max\{X_1,\dots,X_n\}$. Then
\begin{equation}\label{eq:thm2-opt}
	\EE[\max\{X_1,\dots,X_n\}] \geq \PP[X_{(n)}=u]\cdot u=\frac{1-o(1)}{n}\cdot u.
\end{equation}
On the other hand,
\begin{align}
	\EE[X_\tau] &= \PP[X_{(n)}=u\wedge X_{(n-1)}\neq u]\cdot\EE[X_\tau\mid X_{(n)}=u\wedge X_{(n-1)}\neq u] \nonumber \\
	& \phantom{={}} + \PP[X_{(n)}=u\wedge X_{(n-1)}=u]\cdot\EE[X_\tau\mid X_{(n)}=u\wedge X_{(n-1)}=u] \nonumber \\
	& \phantom{={}} + \PP[X_{(n)}\neq u]\cdot\EE[X_\tau\mid X_{(n)}\neq u] \nonumber \\[4pt]
	& \leq \frac{1}{n} \Bigl(\PP[X_\tau=X_{(n)}\mid X_{(n)}=u\wedge X_{(n-1)}\neq u]\cdot u \nonumber \\[-.5ex]
	& \phantom{=\frac{1}{n}\big({}}+\PP[X_\tau\neq X_{(n)}\mid X_{(n)}=u\wedge X_{(n-1)}\neq u]\cdot O(n^{-3})\cdot u\Bigr) \nonumber \\
	& \phantom{={}} + O(n^{-2})\cdot u+1\cdot O(n^{-3})\cdot u \nonumber \\[1ex]
 	&\leq \frac{1}{n} \PP[X_\tau=X_{(n)}\mid X_{(n)}=u\wedge X_{(n-1)}\neq u]\cdot u+o\Bigl(\frac1n\Bigr)\cdot u \nonumber \\[1ex]
  &\leq\frac{1}{n} \PP[X_\tau=X_{(n)}\mid X_{(n)}=u\wedge \text{$X_1,\dots,X_n$ distinct}]\cdot u + o\Bigl(\frac{1}{n}\Bigr) \cdot u,\label{eq:thm2-alg}
\end{align}
where for the first inequality we have applied the law of total expectation to $\EE[X_\tau\mid X_{(n)}=u\wedge X_{(n-1)}\neq u]$ to additionally condition on whether $X_{\tau}=X_{(n)}$, and for the last inequality we have applied the law of total probability to $\PP[X_\tau=X_{(n)}\mid X_{(n)}=u\wedge X_{(n-1)}\neq u]$ to additionally condition on whether $X_1,\dots,X_n$ are distinct.

Given~\eqref{eq:thm2-opt} and~\eqref{eq:thm2-alg}, to show that $\alpha\leq 1/e+o(1)$ it now suffices to show that
\begin{align}
 \PP[X_\tau=X_{(n)}\mid X_{(n)}=u\wedge \text{$X_1,\dots,X_n$ distinct}]\leq 1/e+o(1). \label{eq:little-oh}
\end{align}

Note that in the event where $X_{(n)}=u$ and $X_1,\dots,X_n$ are pairwise distinct, the relative ranks of $X_1,\dots,X_n$ are distributed uniformly at random. For a $0$-value-oblivious stopping rule $\hat{\vr}$ with stopping time $\hat{\tau}$ it thus follows from the well-known optimal solution to the secretary problem \citep[Section 2]{Ferg89a} that
\begin{equation}
\PP[X_{\hat{\tau}}=X_{(n)}\mid X_{(n)}=u\wedge \text{$X_1,\dots,X_n$ distinct}]\leq 1/e+o(1). \label{eq:little-oh2}
\end{equation}
To show this claim for stopping rule $\vr$, which is only $\varepsilon$-value-oblivious with $\varepsilon > 0$, we construct from $\vr$ a $0$-value-oblivious stopping rule $\hat{\vr}$ and show through a coupling argument that the probability that $\vr$ stops at $X_{(n)}$ is bounded by the probability that $\hat{\vr}$ stops at $X_{(n)}$ plus $n\varepsilon=1/n$.

Since $\vr$ is $\varepsilon$-value-oblivious on $V$, $r_i(s_1,\dots,s_{i})\in[q_i-\varepsilon,q_i+\varepsilon)$ for all $i\in[n]$, some $q_i\in[0,1]$, and all distinct $s_1,\dots,s_{i}\in V$ with $s_i>\max\{s_1,\dots,s_{i-1}\}$. Let $\hat{\vr}$ be the stopping rule such that for all $s_1,\dots,s_{i}\in V$, $\hat{r}_i(s_1,\dots,s_i)=q_i$ if $s_i>\max\{s_1,\dots,s_{i-1}\}$ and $\hat{r}_i(s_1,\dots,s_i)=0$ otherwise. Denote by $\hat{\tau}$ the stopping time of $\hat{\vr}$.

Let $s_1,\dots,s_n\in V$ be distinct and assume that $X_1=s_1,\dots,X_n=s_n$. To compare the performance of $\vr$ and $\hat{\vr}$, we can view $\tau$ and $\hat{\tau}$ as being coupled via $n$ independent draws $c_1,\dots,c_n$ from the uniform distribution on $[0,1]$. For every $i\in[n]$, and under the condition that $\tau\geq i$, we can assume that $\tau=i$ if and only if $s_i>\max\{s_1,\dots,s_{i-1}\}$ and $r_i(s_1,\dots,s_i)>c_i$. Similarly, for every $i\in[n]$, and under the condition that $\hat{\tau}\geq i$, we can assume that $\hat{\tau}=i$ if and only if $s_i>\max\{s_1,\dots,s_{i-1}\}$ and $\hat{r}_i(s_1,\dots,s_i)=q_i>c_i$. For $i\in[n]$, let $\xi_i$ be the event that occurs if and only if $s_i>\max\{s_1,\dots,s_{i-1}\}$ and $c_i\in [\min\{r_i(s_1,\dots,s_i),\hat{r}_i(s_1,\dots,s_i)\},\max\{r_i(s_1,\dots,s_i),\hat{r}_i(s_1,\dots,s_i)\}]$. Then $\PP[\xi_i]\leq\varepsilon$, while $X_\tau=X_{(n)}\neq X_{\hat{\tau}}$ requires $\xi_i$ to occur for some $i\in[n]$. Thus, by the union bound,
\[\PP[X_{\tau}=X_{(n)}\mid X_1=s_1,\dots,X_n=s_n]\leq\PP[X_{\hat{\tau}}=X_{(n)}\mid X_1=s_1,\dots,X_n=s_n]+n\varepsilon.\]
Since this statement holds pointwise for all distinct $s_1,\dots,s_n\in V$,
\begin{multline}
	\PP[ X_{\tau}=X_{(n)}\mid X_{(n)}=u\wedge \text{$X_1,\dots,X_n$ distinct}] \leq
	 \\[1ex] \PP[X_{\hat{\tau}}=X_{(n)}\mid X_{(n)}=u\wedge \text{$X_1,\dots,X_n$ distinct}]+n\varepsilon.\label{eq:thm2-unionbound}
\end{multline}
Substituting~\eqref{eq:thm2-unionbound} into~\eqref{eq:little-oh2} yields~\eqref{eq:little-oh}, which completes the proof.
\end{proof}

\subsection{Proof of \autoref{lem:structure1}}  \label{sec:structural}

The lemma claims that for every $\varepsilon > 0$, the existence of a $(k,n)-$stopping rule with performance guarantee $\alpha$ implies that of a $(k,n)$-stopping rule~$\vr$ with the same performance guarantee and of an infinite set $V \subseteq \N$ on which~$\vr$ is $\varepsilon$-value oblivious. Since we can interpret a $(k,n)$-stopping rule as a $(0,n')$-stopping rule with $n' = k+n$ that never stops on the first $k$ values, it will be sufficient to consider $(0,n)$-stopping rules with this additional constraint.
We prove the lemma through a sequence of steps that successively restrict the expressiveness of the stopping rules we have to consider. First we show a restriction to what we call order-oblivious rules, which in the decision to stop at random variable $X_i$, and conditioned on having reached~$X_i$, may take into account the values of random variables $X_1,\dots,X_{i-1}$ but not the order in which they were observed.
\begin{definition}  \label{def:order}  %
	A $(0,n)$-stopping rule $\vr$ is \emph{order-oblivious} if for all $j\in[n]$, all pairwise distinct $v_1,\dots,v_j\in\R_+$ and all permutations $\pi\in \perm_{j-1}$, $r_i(v_1,\dots,v_j)=r_i(v_{\pi(1)},\dots,v_{\pi(j-1)},v_{j})$.
\end{definition}
The following result is rather intuitive.
\begin{lemma}  \label{lem:structure2}
	If there exists a $(0,n)$-stopping rule~$\vr$ with guarantee $\alpha$, then there exists a $(0,n)$-stopping rule~$\vr'$ with guarantee $\alpha$ that is order-oblivious and that, for any $i\in[n]$, never selects $X_i$ if $\vr$ never selects $X_i$.
\end{lemma}

A naive attempt to prove this lemma would be to construct an order-oblivious stopping rule from an arbitrary stopping rule $\vr$ by permuting $X_1,\dots,X_{i-1}$ uniformly at random upon observing~$X_i$, and accepting $X_{i}$ if and only if $\vr$ would accept it under the random permutation. The resulting stopping rule may, however, have a different guarantee than $\vr$ because the probability that $\vr$ arrives at $X_i$ may vary depending on the permutation. Some additional care is therefore required.
\begin{proof}[Proof of \autoref{lem:structure2}]
	For $i\in[n]$, let $\sim_i$ be the equivalence relation on $\R_+^i$ such that $(v_1,\dots,v_i)\sim_i(w_1,\dots,w_i)$ if $v_1,\dots,v_{i-1}$ is a permutation of $w_1,\dots,w_{i-1}$ and $v_i=w_i$. Note that a stopping rule $\vr$ with stopping time $\tau$ is order-oblivious if and only if for all $i\in[n]$ and $v_1,\dots,v_{i},w_1,\dots,w_{i}\in\R_+$ it holds that $r_i(v_1,\dots,v_{i})=r_i(w_1,\dots,w_{i})$ whenever $(v_1,\dots,v_i)\sim_i(w_1,\dots,w_i)$. We will refer to the equivalence classes of $\sim_i$ as \emph{states}, and will say that $\vr$ \emph{arrives at $s\in\R_+^i/\sim_i$} in the event that $\tau\geq i$ and $X_1=v_1,\dots,X_{i-1}=v_{i-1}$ where $[v_1,\dots,v_i]_{\sim_i}=s$.
	
	Let $\vr$ be an arbitrary stopping rule with stopping time $\tau$, and define a stopping rule $\vr'$ with stopping time $\tau'$ such that $r_1(v_1)=r'_1(v_1)$ and for all $i\in\{2,\dots,n\}$ and $v_1,\dots,v_i\in\R_+$ with $\Pr{\text{$\vr$ arrives at $[v_1,\dots,v_i]_{\sim_i}$}}>0$,
\begin{align*}
	r'_i(v_1,\dots,v_i) = &\;\PP\bigl[\tau=i \mid \text{$\vr$ arrives at $[v_1,\dots,v_i]_{\sim_i}$}\bigr].
\end{align*}
Since $[v_1,\dots,v_i]_{\sim_i}$ is invariant under permutations of the sequence $v_1,\dots,v_{i-1}$, $\vr'$ is indeed order-oblivious. Moreover, for any $i\in[n]$, if $\vr$ never selects $X_i$ then neither does $\vr'$. It remains to be shown that $\vr'$ provides guarantee $\alpha$.

As an intermediate step we show by induction that for all $i\in[n]$ and $s\in\R_+^i/\sim_i$,
\begin{align}  \label{eq:state-distr}
	\PP[\text{$\vr$ arrives at $s$}]=\PP[\text{$\vr'$ arrives at $s$}].
\end{align}
This holds trivially for $i=1$, so we assume that it holds for $i=k-1\geq 1$ and show then that it holds for $i=k$. For any $v_1,\dots,v_k\in\R_+$ and $s=[v_1,\dots,v_k]_{\sim_k}$, 
we write $v_{-j} = (v_1, \dots, v_{j-1},v_{j+1}, \dots, v_{k-1})$ for the sequence of length $k-2$ in which $v_j$ has been left out and $(v_{-j},v_j)=(v_1,\dots,v_{j-1},v_{j+1},\dots,v_{k-1},v_j)$ for the sequence of length $(k-1)$ obtained by appending $v_j$ to $v_{-j}$. Then
\begin{align*}
	\PP[&\text{$\vr$ arrives at $s$}] \\
	&= \sum_{j=1}^{k-1}\PP\big[\vr\text{ arrives at $[v_{-j},v_j]_{\sim_{k-1}}$}\big] \cdot \PP\big[\tau\neq i \mid \text{$\vr$ arrives at $[v_{-j},v_j]_{\sim_{k-1}}$}\big] \cdot\Pr{X_k=v_k} \\[.5ex]
	&= \sum_{j=1}^{k-1}\PP\big[\text{$\vr'$ arrives at $[v_{-j},v_j]_{\sim_{k-1}}$}\big] \cdot \PP\big[\tau'\neq i \mid \text{$\vr'$ arrives at $[v_{-j},v_j]_{\sim_{k-1}}$}\big] \cdot \Pr{X_k=v_k} \\[.5ex]
	&= \PP[\text{$\vr'$ arrives at $s$}],
\end{align*}
where the first and last equalities hold by definition of $\sim_{k-1}$ and the second equality by the induction hypothesis and by definition of $\vr'$.

We now claim that
\begingroup\allowdisplaybreaks
\begin{align*}
	\E{X_\tau} &= \sum_{i=1}^n\E{X_i \mid \tau=i} \cdot \Pr{\tau=i} \\
	&= \sum_{i=1}^n\int_0^\infty\dots\int_0^\infty \prod_{j=1}^i f(v_j) \cdot v_i \\
	&\hspace{1cm} \cdot \frac{1}{(i-1)!} \cdot \sum_{\pi\in \perm_{i-1}} \Pr{\tau=i\mid X_1=v_{\pi(1)},\dots,X_{i-1}=v_{\pi(i-1)},X_i=v_i}\;\mathrm{d}v_1\dots\mathrm{d}v_i \\
	&= \sum_{i=1}^n\int_0^\infty\dots\int_0^\infty \prod_{j=1}^i f(v_j) \cdot v_i \cdot \PP\big[\tau=i \mid \text{$\vr$ arrives at $[v_1,\dots,v_i]_{\sim_i}$}\big] \\[-3ex]
	& \phantom{{}=\sum_{i=1}^n\int_0^\infty\dots\int_0^\infty \prod_{j=1}^i f(v_j) \cdot v_i\cdot r'_i(v_1,\dots,v_i)}  \cdot \PP\big[\text{$\vr$ arrives at $[v_1,\dots,v_i]_{\sim_i}$}\big]\;\mathrm{d}v_1\dots\mathrm{d}v_i \\[-1.5ex]
	&= \sum_{i=1}^n\int_0^\infty\dots\int_0^\infty \prod_{j=1}^i f(v_j) \cdot v_i\cdot r'_i(v_1,\dots,v_i)  \cdot \PP\big[\text{$\vr'$ arrives at $[v_1,\dots,v_i]_{\sim_i}$}\big]\;\mathrm{d}v_1\dots\mathrm{d}v_i\\		
	&= \sum_{i=1}^n\E{X_i \mid \tau'=i} \cdot \Pr{\tau'=i} 
	= \E{X_\tau'}.
\end{align*}\endgroup
Indeed, the second equality can be seen to hold by imagining that $X_1,\dots,X_i$ are drawn by first drawing $i$ values independently and then permuting the first $i-1$ of these values uniformly at random. The fourth equality holds by definition of $\vr'$ and by~\eqref{eq:state-distr}. This completes the proof.
\end{proof}

To further restrict the class of stopping rules from order-oblivious to value-oblivious ones we will now construct, for every order-oblivious rule $\vr$ and any $\varepsilon>0$, an infinite set $V\subseteq\N$ on which $\vr$ is $\varepsilon$-value-oblivious. The set~$V$ will depend on $\vr$ and will be obtained by starting from $\N$ and identifying smaller and smaller subsets on which the behavior of $\vr$ is more and more limited. By induction on $i\in[n]$ we will identify a set on which value-obliviousness holds with respect to the $i$th random variable.
We need the following definition.
\begin{definition}\label{def:obliv}
	Consider a $(0,n)$-stopping rule $\vr$. Let $\varepsilon>0$, $i\in[n]$, and $V\subseteq\N$. Then $\vr$ is \emph{$(\varepsilon,i)$-value-oblivious on $V$} if there exists $q\in[0,1]$ such that, for all pairwise distinct $v_1,\dots,v_i\in V$ with $v_i>\max\{v_1,\dots,v_{i-1}\}$, it holds that $r_i(v_1,\dots,v_i)\in[q-\varepsilon,q+\varepsilon)$.
\end{definition}
Note that $(\varepsilon,i)$-value-obliviousness for all $i\in[n]$ is equivalent to $\varepsilon$-value-obliviousness.
In establishing $(\varepsilon,i)$-value-obliviousness for a particular value of~$i$ we will appeal to the infinite version of Ramsey's theorem to show the existence of an appropriate set~$V$.
\begin{lemma}[\normalfont\citet{Rams30a}]  \label{lem:ramsey}
	Let $c,d\in\N$, and let~$H$ be an infinite complete $d$-uniform hypergraph whose hyperedges are colored with $c$ colors. Then there exists an infinite complete $d$-uniform sub-hypergraph of~$H$ that is monochromatic.
\end{lemma}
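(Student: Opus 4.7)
The plan is to prove the infinite Ramsey theorem by induction on the uniformity parameter $d$. The base case $d=1$ is immediate: a $c$-coloring of an infinite vertex set must, by the pigeonhole principle, give some color to infinitely many vertices, yielding the desired infinite monochromatic $1$-uniform sub-hypergraph.

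For the inductive step, assume the statement holds for $d-1$ with any finite number of colors, and let $H$ be an infinite complete $d$-uniform hypergraph with vertex set $V_0$ whose edges are $c$-colored. I would iteratively construct vertices $v_1,v_2,\dots$, nested infinite subsets $V_0\supseteq V_1\supseteq V_2\supseteq\cdots$, and ``dominant'' colors $\kappa_1,\kappa_2,\dots\in\{1,\dots,c\}$ as follows. At step $i\geq 1$, pick any $v_i\in V_{i-1}$ and define a derived $c$-coloring on the $(d-1)$-element subsets of $V_{i-1}\setminus\{v_i\}$ by assigning to $\{u_1,\dots,u_{d-1}\}$ the original color of the hyperedge $\{v_i,u_1,\dots,u_{d-1}\}$. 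By the inductive hypothesis applied to the infinite complete $(d-1)$-uniform hypergraph on $V_{i-1}\setminus\{v_i\}$, there exists an infinite subset $V_i\subseteq V_{i-1}\setminus\{v_i\}$ that is monochromatic under this derived coloring; call its color $\kappa_i$.

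The construction guarantees that for every $i$ and every $(d-1)$-element subset $T\subseteq\{v_{i+1},v_{i+2},\dots\}\subseteq V_i$, the original hyperedge $\{v_i\}\cup T$ has color $\kappa_i$. Now apply pigeonhole to the infinite sequence $\kappa_1,\kappa_2,\dots$: some single color $\kappa$ appears for an infinite subsequence of indices $i_1<i_2<\cdots$. The set $W:=\{v_{i_1},v_{i_2},\dots\}$ is the sought monochromatic sub-hypergraph, since for any $j_1<\cdots<j_d$ the hyperedge $\{v_{i_{j_1}},\dots,v_{i_{j_d}}\}$ consists of $v_{i_{j_1}}$ together with $d-1$ later vertices, all lying in $V_{i_{j_1}}$, so by construction its color is $\kappa_{i_{j_1}}=\kappa$.

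The only real obstacle is bookkeeping: one must check that the inductive step can indeed be iterated indefinitely (each $V_i$ is infinite, so a next vertex $v_{i+1}$ and a next monochromatic refinement $V_{i+1}$ always exist) and that the final pigeonhole step is applied to the colors $\kappa_{i}$ associated with the retained vertices, so that the resulting $W$ is monochromatic for the \emph{original} hyperedge coloring rather than merely for one of the derived colorings.
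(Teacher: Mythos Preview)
The paper does not supply a proof of this lemma at all; it is stated as the classical infinite Ramsey theorem and attributed to \citet{Rams30a}, then used as a black box in the proof of \lemref{lem:structure1}. Your argument is correct and is precisely the standard proof by induction on the uniformity~$d$, so there is nothing to compare.
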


\begin{proof}[Proof of \autoref{lem:structure1}]
	Suppose that there exists a $(k,n)$-stopping rule with guarantee $\alpha$. By interpreting this rule as a $(0,n')$-stopping rule with $n'=k+n$, and by \autoref{lem:structure2}, there then exists a $(0,n')$-stopping rule $\vr$ that is order-oblivious and never stops on $X_1,\dots,X_k$. We fix $\varepsilon>0$ for the entire proof and show by induction on $j\in[n']$ that there exists an infinite set $S^j\subseteq\N$ such that, for all $i\in[j]$, $\vr$ is $(\varepsilon,i)$-value-oblivious on $S^j$.
	For $j=n'$, this implies that the stopping rule $\vr$ is $(\varepsilon,j)$-value-oblivious on $S^n$ for all $j\in \N$, and hence $\varepsilon$-value-oblivious on $S^n$. The claim then follows by reinterpreting $\vr$ as a $(k,n)$-stopping rule, which is possible because it never stops on $X_1,\dots,X_k$.
	
	$S^0=\N$ clearly satisfies the claim for $j=0$, and we proceed to show the claim for $j=\ell>0$ assuming that it holds for $j<\ell$.
	Note that it suffices to find an infinite set $S^{\ell}\subseteq S^{\ell-1}$ such that~$\vr$ is $(\varepsilon,\ell)$-value-oblivious on $S^{\ell}$, as the induction hypothesis then implies $(\varepsilon,i)$-value-obliviousness on $S^\ell$ as a subset of $S^i$ for all $i\in[\ell-1]$.

Toward the application of \autoref{lem:ramsey}, we construct a complete $\ell$-uniform hypergraph $H$ with vertex set $S^{\ell-1}$. Consider any set $\{v_1,\dots,v_{\ell}\}\subseteq S^{\ell-1}$ of cardinality $\ell$ such that $v_{\ell}>\max\{v_1,\dots,v_{\ell-1}\}$. Note that there exists a unique $u\in\{1,2,\dots,\lceil 1/(2\varepsilon)\rceil\}$ such that $r_{\ell}(v_1,\dots,v_{\ell})\in[(2u-1)\cdot\varepsilon-\varepsilon,(2u-1)\cdot\varepsilon+\varepsilon)$. Color the hyperedge $\{v_1,\dots,v_{\ell}\}$ of $H$ with color $u$. 

By \autoref{lem:ramsey} with $c=\lceil 1/2\varepsilon\rceil$ and $d=\ell$, there exists an infinite set of vertices that induces a complete monochromatic sub-hypergraph of $H$. Define $S^{\ell}$ to be such a set inducing a monochromatic sub-hypergraph of $H$ with color $u$. Set $q=(2u-1)\varepsilon$ and consider distinct $v_1,\dots,v_{\ell}\in S^{\ell}$ with $v_{\ell}>\max\{v_1,\dots,v_{\ell-1}\}$. Since the edge $\{v_1,\dots,v_{\ell}\}$ in $H$ has color $u$, $r_{\ell}(v_{\pi(1)},\dots,v_{\pi(\ell-1)},v_{\ell})\in[q-\varepsilon,q+\varepsilon)$ for some permutation $\pi\in\mathcal{S}_{\ell-1}$. But since $\vr$ is order-oblivious, also $r_{\ell}(v_1,\dots,v_{\ell-1},v_{\ell})\in[q-\varepsilon,q+\varepsilon)$. So $\vr$ is $(\varepsilon,\ell)$-value-oblivious on $S^{\ell}$. This completes the induction step and the proof.
\end{proof}

\subsection{Extension of the Upper Bound to $\mathbf{o(n)}$ Samples}  \label{subsec:cor}

We conclude this section by showing that even with $o(n)$ samples the guarantee of~$1/e$ is still best possible. To gain some intuition why this is true, assume that there existed an $(o(n),n)$-stopping rule $\vr$ with guarantee greater than $1/e$ by some constant. We could then obtain a $(0,n)$-stopping rule $\vr'$ that interprets, for a suitable choice of~$n'$, the first $o(n')$ values as samples and the following $n'$ values as actual values on which it may stop, and then runs $\vr$ in this setting. If we choose $n'=(1-o(1))\cdot n$, the expected maxima of $n$ and $n'$ draws from any distribution are identical up to a $(1-o(1))$ factor. The guarantee of $\vr$ would thus carry over to $\vr'$, contradicting \autoref{thm:upper}.

\begin{corollary}  \label{cor:upper_sublin}
	Let $\delta>0$ and $f:\N\rightarrow\N$ with $f(n)=o(n)$. Then there exists $n_0\in\N$ such that for any $n\geq n_0$ and any $(f(n),n)$-stopping rule with stopping time $\tau$ there exists a distribution $F$, not known to the stopping rule, such that when $X_1,\dots,X_n$ are i.i.d.\@ random variables drawn from $F$,  \vspace*{-2ex}
	\begin{align*}
		\EE[X_\tau] \leq \bigg(\frac{1}{e}+\delta\bigg) \cdot \EE[\max\{X_1,\dots,X_n\}]_{}.
	\end{align*}
\end{corollary}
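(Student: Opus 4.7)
The plan is to reduce the $f(n)$-samples case to the sample-free case by absorbing the samples into the observation sequence. Given a $(f(n),n)$-stopping rule $\vr$ with stopping time $\tau$, I would define a $(0,m)$-stopping rule $\vr'$ on $m := n + f(n)$ observations that never stops in positions $1, \dots, f(n)$ and, for $i > f(n)$, invokes $r_{i-f(n)}$ with the first $f(n)$ observations playing the role of $\vr$'s samples and observations $f(n)+1, \dots, i$ playing the role of $\vr$'s variables. Since all $m$ observations are i.i.d.\@ from the same distribution, $\vr'$ is a valid $(0,m)$-stopping rule, and its stopping time $\tau'$ satisfies $X'_{\tau'} = X_\tau$ pointwise for any distribution~$F$.

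I would then apply the proof of \thmref{thm:upper} to $\vr'$. That proof is constructive: it invokes \lemref{lem:structure1} with $\eps = 1/m^2$ to obtain an infinite set $V \subseteq \N$ on which a stopping rule with the same guarantee as $\vr'$ is $\eps$-value-oblivious, and then exhibits a specific distribution $F$ supported on $\{v_1, \dots, v_{m^3}, u\} \subseteq V$ with $\PP[X = u] = 1/m^2$, $\PP[X = v_j] = (1-1/m^2)/m^3$, and $u \geq m^3 \max_j v_j$. On this particular $F$, the same calculation as in the proof of \thmref{thm:upper} gives $\EE[X'_{\tau'}] \leq (1/e + \delta'')\,\EE[\max\{X'_1, \dots, X'_m\}]$ for any prescribed $\delta'' > 0$ and all sufficiently large $m$.

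To close the argument I would compare the two expected maxima on this specific $F$. Because $\max_j v_j \leq u/m^3$, the non-$u$ contribution to either expected maximum is at most $u/m^3$, while $\PP[\max\{X_1,\dots,X_n\} = u] = 1-(1-1/m^2)^n = (1-o(1))\,n/m^2$ and $\PP[\max\{X'_1,\dots,X'_m\} = u] = 1-(1-1/m^2)^m = (1+o(1))/m$. Hence $\EE[\max\{X'_1,\dots,X'_m\}] \leq (1+o(1)) \cdot (m/n) \cdot \EE[\max\{X_1,\dots,X_n\}]$, and $m/n = 1 + f(n)/n = 1 + o(1)$ because $f(n) = o(n)$. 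Combining with the hypothesis $\EE[X_\tau] \geq (1/e+\delta)\,\EE[\max\{X_1,\dots,X_n\}]$ yields $(1/e+\delta) \leq (1/e+\delta'')(1+o(1))$, which fails for $\delta'' := \delta/2$ and all sufficiently large $n$, producing the desired contradiction.

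The main obstacle is that \thmref{thm:upper} as stated only asserts the existence of a bad distribution without describing its shape, so the ratio of expected maxima cannot be controlled from the statement alone; one must instead open up the proof and exploit the fact that its constructed distribution concentrates essentially all of its contribution to $\EE[\max]$ on the single atom $u$, so that the two expected maxima differ only by the factor $m/n$ up to lower-order terms.
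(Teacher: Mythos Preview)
Your reduction is correct and is precisely the informal argument the paper sketches in the paragraph following the corollary's statement. The paper's \emph{formal} proof, however, takes a different route: it invokes \thmref{thm:upper_par} (the parametric upper bound for $(\gamma n,n)$-rules) as a black box, choosing $\gamma>0$ small enough that $b(\gamma)=(1+\gamma)/e\leq 1/e+\delta/2$, and then observing that any $(f(n),n)$-rule is in particular a $(\gamma n,n)$-rule once $f(n)\leq\gamma n$. Your approach avoids the dependence on \thmref{thm:upper_par} at the cost of re-running the construction of \thmref{thm:upper}; the paper's approach is shorter only because the parametric theorem has already absorbed the samples into the secretary-style argument.

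One simplification you are missing: your ``main obstacle'' is not actually an obstacle. The inequality $\EE[\max\{X'_1,\dots,X'_m\}]\leq(m/n)\,\EE[\max\{X_1,\dots,X_n\}]$ holds for \emph{every} distribution, since $(1-F^m)/(1-F^n)=\bigl(\sum_{i=0}^{m-1}F^i\bigr)/\bigl(\sum_{i=0}^{n-1}F^i\bigr)\leq m/n$ pointwise on $[0,1]$ and $\EE[\max_k]=\int_0^\infty(1-F(x)^k)\,\mathrm{d}x$. Hence you can apply \thmref{thm:upper} to $\vr'$ as a black box to obtain some bad $F$, and the universal inequality immediately gives $\EE[X_\tau]=\EE[X'_{\tau'}]\leq(1/e+\delta'')(m/n)\,\EE[\max_n]=(1/e+\delta'')(1+o(1))\,\EE[\max_n]$ without inspecting $F$ at all. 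This is exactly what the paper means by ``the expected maximum of $n$ and $n'$ draws (from any distribution) are identical up to a $(1-o(1))$ factor.''
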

\begin{proof}
	For $\delta>0$, choose $\gamma>0$ such that $(1+\gamma)/e \leq 1/e+\delta/2$ and $\gamma/(1+\gamma)\leq 1/e$. By \autoref{thm:upper_par}, there exists an $n_1$ such that for all $n\geq n_1$ and every $(\gamma\,n,n)$-stopping rule with stopping time $\tau$ there exists a distribution $F$, not known to the stopping rule, with the following property. When $X_1,\dots,X_n$ are i.i.d.\@ random variables drawn from~$F$, we have
	\begin{align*}
		\EE[X_\tau] \leq \left(\frac{1+\gamma}{e}+\frac\delta2\right) \cdot \EE[\max\{X_1,\dots,X_n\}]\leq\left(\frac{1}{e}+\delta\right) \cdot \EE[\max\{X_1,\dots,X_n\}]_{},
	\end{align*}
where the second inequality follows by our choice of $\gamma$.

Now let $n_0$ be such that $f(n)\leq\gamma\,n$ for all $n\geq n_0$. As every $(f(n),n)$-stopping rule can be interpreted as a $(\gamma\,n,n)$-stopping rule when $n\geq n_0$, the above bound for $(\gamma\,n,n)$-stopping rules applies to $(f(n),n)$-stopping rules as well when $n\geq\max\{n_0,n_1\}$. This proves the claim.
\end{proof}

\section{Linear Number of Samples}  \label{sec:linear}

The previous section has revealed a strong impossibility: even with $o(n)$ samples it is impossible to improve over the straightforward lower bound of $1/e\approx 0.368$ achieved by the well-known optimal stopping rule for the secretary problem. We proceed to show that there is a sharp phase transition when going from $o(n)$ samples to $\Omega(n)$ samples, by giving an algorithm that uses as few as $n-1$ samples and improves the lower bound from $1/e$ to $1-1/e \approx 0.632$. We also show that the bound of $1-1/e$ is in fact tight for two different classes of algorithms that share certain features of our algorithm. This illustrates that our analysis is tight and limits the types of approaches that could conceivably be used to go beyond $1-1/e$. We also show a parametric upper bound for algorithms that use $\gamma\, n$ samples for any $\gamma\geq 0$. For algorithms that use at most $n$ samples this bound is equal to $\ln(2)\approx 0.693$ and thus nearly tight.

\subsection{Warm-Up: A $\mathbf{1/2}$-Approximation with $\mathbf{n-1}$ Samples}  \label{sec:warm-up}

To gain some intuition let us first consider the natural approach to sample $n-1$ values $S_1,\dots,S_{n-1}$ from $F$ and to use the maximum of these samples as a uniform threshold for all of the random variables $X_1,\dots,X_n$, accepting the first random variable that exceeds the threshold. It is not difficult to see that the expected value we collect from any random variable $X_t$ conditioned on stopping at that random variable is at least $\E{\max\{X_1,\dots,X_n\}}$, since under this condition~$X_t$ is the maximum of at least~$n$ i.i.d.\@ random variables. We can thus understand the approximation guarantee provided by this approach by understanding the probability that it stops on some random variable. It turns out that this probability, and hence the approximation guarantee, is $1/2+1/(4n-2)$. A detailed analysis of the approach is provided for completeness in \autoref{app:one-half}. 

\subsection{A $\mathbf{(1-1/e)}$-Approximation with $\mathbf{n-1}$ Samples}  \label{subsec:1-1e}

We proceed to show that it is indeed possible to obtain an improved bound of $1-(1-1/n)^n\geq 1-1/e\approx 0.632$ with just $n-1$ samples. 
Our algorithm improves over the naive approach that obtains a factor of~$1/2$ by increasing the probability that we stop at all, while maintaining the property that the expected value that we collect when we do stop is at least $\E{\max\{X_1,\dots,X_n\}}$.
\begin{theorem}\label{thm:fresh-looking-samples}
	Let $X_1,X_2,\dots,X_n$ be i.i.d.\@ random variables from an unknown distribution $F$. Then there exists an $(n-1,n)$-stopping-rule with stopping time~$\tau$ such that
\begin{align*}
	\E{X_\tau} = \left(1-\left(1-\frac{1}{n}\right)^n\right) \cdot \E{\max\{X_1, \dots, X_n\}}.
\end{align*}
\end{theorem}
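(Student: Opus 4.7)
I would construct the stopping rule sketched in the introduction. Initialize $T_1 := \{S_1, \dots, S_{n-1}\}$, and at each step $t \in \{1, \dots, n\}$: if $X_t > \max T_t$, stop and accept $X_t$; otherwise form $T_{t+1}$ from $T_t \cup \{X_t\}$ by removing one element chosen uniformly at random, and proceed to step $t+1$.

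The technical core of the analysis is the following invariant, which I would prove by induction on $t$: \emph{conditional on the event $A_t := \{\tau \geq t\}$ of reaching step $t$, the pair $(X_t, T_t)$ is jointly distributed as one i.i.d.\@ draw from $F$ together with an independent multiset of $n-1$ i.i.d.\@ draws from $F$}. The base case $t=1$ is immediate from the construction, since $X_1$ and the samples are independent i.i.d.\@ draws. For the inductive step, under $A_t$ the multiset $M_t := T_t \cup \{X_t\}$ consists of $n$ i.i.d.\@ draws from $F$, and by exchangeability the label ``$X_t$'' is uniformly distributed among the $n$ elements conditional on $M_t$. The extra conditioning event $A_{t+1} = A_t \cap \{X_t \leq \max T_t\}$ is precisely ``$X_t$ is not the unique maximum of $M_t$'', which has conditional probability $(n-1)/n$ given $M_t$, irrespective of the realization; by Bayes' rule the distribution of $M_t$ is therefore unchanged by further conditioning on $A_{t+1}$. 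Removing a uniformly random element (independent of the past) then yields a multiset $T_{t+1}$ that, by symmetry, is distributed as $n-1$ i.i.d.\@ draws from $F$, and $X_{t+1}$ is a fresh independent draw. Continuity of $F$, assumed in the Preliminaries, guarantees that ties have probability zero.

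Given the invariant, the rest is a direct computation. Under $A_t$, the conditional probability of stopping at step $t$ is $\Pr{X_t = \max M_t \mid A_t} = 1/n$ by the uniform-label property, so $\Pr{A_{t+1}} = (1 - 1/n)\Pr{A_t}$ and hence $\Pr{\tau \leq n} = 1 - (1-1/n)^n$. Conditional on $\tau = t$, $X_t$ equals $\max M_t$, which is distributed as the maximum of $n$ i.i.d.\@ draws from $F$; thus $\E{X_\tau \mid \tau = t} = \E{\max\{X_1, \dots, X_n\}}$ for every $t \in [n]$. Summing,
\[
\E{X_\tau} \;=\; \Pr{\tau \leq n} \cdot \E{\max\{X_1, \dots, X_n\}} \;=\; \left(1 - \left(1 - \tfrac{1}{n}\right)^n\right) \E{\max\{X_1, \dots, X_n\}}.
\]

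The main obstacle is the invariant, and specifically verifying that the two operations---conditioning on the non-stopping event, and then deleting a uniformly random element from $M_t$---together preserve the distribution of a fresh batch of $n-1$ i.i.d.\@ samples. The key lies in the exchangeability of i.i.d.\@ draws, which renders the non-stopping event conditionally independent of the multiset $M_t$ and thereby blocks any Bayesian reweighting. Once this symmetry step is made precise, both the $1/n$ per-step acceptance probability and the fact that accepted values are distributed as the maximum of $n$ fresh draws fall out with no further work.
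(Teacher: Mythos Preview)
Your proof is correct and follows essentially the same approach as the paper: both hinge on the exchangeability of the $n$ values in $T_t \cup \{X_t\}$ to argue that acceptance at step $t$ occurs with probability $1/n$ independently of the past and that the accepted value is distributed as the maximum of $n$ fresh draws. The paper packages this via an auxiliary process $(R^i_\ell)$ that continues regardless of stopping and an independence lemma for the acceptance events (Lemma~\ref{lem:independence}), whereas you state and prove the stronger conditional-distribution invariant directly by induction; the underlying symmetry argument and the final computation are identical.
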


Note that a guarantee of $1-1/e-\varepsilon$ with $O_\varepsilon(n)$ samples follows from a result of \citet{EhsaniHKS18} by observing that $O_\varepsilon(n)$ samples provide a sufficiently good approximation to the $1/e$-quantile of the distribution of $\max\{X_1,\dots,X_n\}$. Here we take a different route that yields the bound \emph{exactly} and that, more importantly, can be developed further to work with only $n-1$ samples.

Suppose we were given access to $n(n-1) \in \Theta(n^2)$ samples. Then we could partition the $n(n-1)$ samples into~$n$ sets of size $n-1$ each, and use the maximum of the $i$th set as a threshold for the $i$th random variable. Upon acceptance of any random variable, that random variable would have a value equal to the expected maximum of $n$ i.i.d.\@ random variables, which is equal to $\E{\max\{X_1,\dots,X_n\}}$. Conditioned on reaching the $i$th random variable it would be accepted with probability $1/n$, for an overall probability of acceptance of $\sum_{i=1}^{n}(1-1/n)^{i-1} \cdot 1/n=1-(1-1/n)^n$.

\begin{algorithm}[t]
\SetInd{0.5em}{1em}
\DontPrintSemicolon
\KwData{Sequence of i.i.d.~random variables $X_1,\dots,X_n$ sampled from an unknown distribution~$F$, sample access to $F$} %
\KwResult{Stopping time $\tau$}
  $S_1, \dots, S_{n-1} \longleftarrow \text{$n-1$ independent samples from $F$}$\;
  $S \longleftarrow \{S_1, \dots, S_{n-1}\}$ \;
  \For{$t = 1, \dots, n$}{
    \lIf{$X_t \geq \max S$}{
		\Return $t$
    }
    \Else{
	$S \longleftarrow \text{random subset of size $n-1$ of $\{S_1, \dots, S_{n-1}, X_1, \dots, X_t\}$}$\\
    }
    }
    \Return $n+1$
\caption{Fresh-looking samples}
\label{alg:fresh-looking-samples}
\end{algorithm}

\autoref{alg:fresh-looking-samples} mimics this approach, but instead of using $n-1$ fresh samples for each of the $n$ random variables it constructs $n-1$ fresh-looking samples for each of the $n$ random variables from a \emph{single} set $\{S_1,\dots,S_{n-1}\}$ of $n-1$ samples. The algorithms starts by drawing $n-1$ samples $S_1,\dots,S_{n-1}$. Then, for each time step $t$, it compares the current random variable $X_t$ to the maximum $\max S$ of a random subset~$S$ of size $n-1$ of the set $\{S_1,\dots,S_{n-1},X_1,\dots,X_{t-1}\}$ containing the initial samples and the random variables seen previously. If $X_t\geq\max S$, the algorithm accepts $X_t$ and stops. Otherwise, it continues to the next random variable.

The key ingredient in our analysis is the following lemma, which concerns the distribution of the unordered set of values seen before step $t$ under the condition that the algorithm has reached that step.

\begin{lemma}\label{lem:independence}
	If \autoref{alg:fresh-looking-samples} arrives at step~$t$, the distribution of the set $\{S_1,\dots,S_{n-1}, X_1,\dots,\linebreak[1]X_{t-1}\}$ is identical to the distribution of a set of $n+t-2$ fresh samples from $F$.
\end{lemma}
\begin{proof}
	We show the claim by induction on~$t$, and start by observing that it clearly holds for $t=1$ as $\{S_1, \dots, S_{n-1}\}$ \emph{is} a set of $n-1$ fresh samples from $F$.
	
	Now assume that the claim holds for $t=1,\dots,t^\star-1$. Then, by the induction hypothesis and under the condition that the algorithm arrives at step $t^\star-1$, the set $\mathcal{T}=\{S_1,\dots,S_{n-1},X_1,\dots,X_{t^\star-2}\}$ has the same distribution as a set of $n+t^\star-3$ fresh samples from $F$. We now consider the set $\mathcal{T}' = \{S_1,\dots,S_{n-1},X_1,\dots,X_{t^\star-1}\}$, which additionally includes $X_{t^\star-1}$, and claim that conditioned on arriving at step $t^\star$ this set is distributed like a set of $n+t^\star-2$ fresh samples. Note that the latter is true \emph{before} the decision to stop or not to stop at step $t^\star-1$ is taken. To show the claim we will argue that the decision of the algorithm to stop or not to stop at step $t^\star-1$ does not depend on the realization of the set $\mathcal{T}'$.
	
	Fix any realization $\{y_1,\dots,y_{n+t^\star-2}\}$ of $\mathcal{T'}$, and assume that the algorithm has arrived at step $t^\star-1$. Since $F$ is continuous we may assume that the values $y_1,\dots,y_{n+t^\star-2}$ are pairwise distinct, and without loss of generality that $y_1<y_2<\dots<y_{n+t^*-2}$. Since $\mathcal{T}$ is distributed like a set of fresh samples and $X_{t^\star-1}$ is a fresh sample, it must be the case that $X_{t^\star-1}$ is distributed uniformly over $\{y_1,\dots,y_{n+t^\star-2}\}$ and the elements of $\mathcal{T}$ are equal to the remaining values in $\{y_1,\dots,y_{n+t^\star-2}\}$. The algorithm now stops at step $t^\star-1$ if $X_{t^\star-1}=y_{n+t^\star-2}$, and this happens with probability $1/n$.
\end{proof}

\begin{proof}[Proof of \autoref{thm:fresh-looking-samples}]
	The value $\E{X_\tau}$ obtained by \autoref{alg:fresh-looking-samples} can be written by summing over all possible stopping times $t=1,\dots,n$ the product of the probability of stopping at $X_t$ and the expectation of $X_t$ upon stopping. Writing $S^t$ for the random subset used in step $t$, we thus have
\begin{align}
	\E{X_\tau}\; & = \sum_{t=1}^{n} \bigg(\Pr{X_t \geq \max\{S^t\} \wedge X_j < \max\{S^j\} \text{ for $j < t$}} \notag \\[-2ex]
	&\hspace*{80pt}\cdot\; \E{X_t \mid X_t \geq \max\{S^t\} \wedge X_j < \max\{S^j\} \text{ for $j < t$}}\bigg). \label{eq:alg-formula}
\end{align}

For any $t\in\{1,\dots,n\}$,
\begin{align}
	&\Pr{X_t \geq \max\{S^t\} \wedge X_j < \max\{S^j\} \text{ for $j < t$}} \notag \\[1ex]
	&\hspace{110pt} =\;\Pr{X_t \geq \max\{S^t\} \mid X_j < \max\{S^j\} \text{ for $j < t$}} \notag \\[6pt]
	&\hspace*{130pt} \;\cdot \prod_{\ell < t} \Pr{X_\ell < \max\{S^\ell\} \mid X_j < \max\{S^j\}  \text{ for $j < \ell$}} \notag\\
	&\hspace{110pt} =\;\frac{1}{n} \left(1-\frac{1}{n}\right)^{t-1}, \label{eq:prob-formula}
\end{align}
where the first equality can be obtained by repeated application of the definition of conditional probabilities and the second equality follows from \autoref{lem:independence}.

Denoting by $\{T_1,\dots,T_{n-1}\}$ a set of fresh samples from $F$, we claim that
\begin{align}
	&\E{X_t \mid X_t \geq \max S^t \wedge X_j < \max S^j \text{ for $j < t$}} \notag \\[6pt]
	&\hspace{110pt}=\; \E{X_t \mid X_t \geq \max\{T_1, \dots, T_{n-1}\} \wedge X_j < \max S^j \text{ for $j < t$}}  \notag\\[6pt]
	&\hspace{110pt}=\; \E{X_t \mid X_t \geq \max\{T_1, \dots, T_{n-1}\}}  \notag\\[6pt]
	&\hspace{110pt}=\; \E{\max\{X_1, \dots, X_n\}}. \label{eq:exp-formula}
\end{align}
Indeed the first equality holds because, under the condition that $X_j<\max S^j$ for $j<t$ and by \autoref{lem:independence}, $S^t$ is distributed like $\{T_1,\dots,T_{n-1}\}$. The second equality holds because $X_t$ itself is independent of whether $X_j<\max S^j$ for $j<t$, and the third equality because $X_t$ is distributed like a fresh sample.

By substituting~\eqref{eq:prob-formula} and~\eqref{eq:exp-formula} into~\eqref{eq:alg-formula} we obtain
\begin{align*}
	\E{X_\tau} 
	&= \sum_{t=1}^{n} \left(\frac{1}{n} \left(1-\frac{1}{n}\right)^{t-1} \right)\cdot \E{\max\{X_1, \dots, X_n\}} \\
	&= \left(1-\left(1-\frac{1}{n}\right)^n\right)\cdot \E{\max\{X_1, \dots, X_n\}},
\end{align*}
as claimed.
\end{proof}

\subsection{Going Beyond $\mathbf{1-1/e}$}\label{subsec:beyond}

We proceed to show an upper bound of $1-1/e$ for two different classes of algorithms that share certain characteristics of \autoref{alg:fresh-looking-samples}. This shows that our analysis of \autoref{alg:fresh-looking-samples} is tight and limits the class of algorithms that could conceivably improve on the guarantee of $1-1/e$.

Algorithms in the first class, upon reaching the $i$th random variable, stop at this random variable with a probability that is independent of~$i$. This is the case for \autoref{alg:fresh-looking-samples}, which by \autoref{lem:independence} 
stops with probability $1/n$ upon reaching a particular random variable.
The upper bound we obtain applies even in the case where the distribution~$F$ is known, and to stopping rules that like \autoref{alg:fresh-looking-samples} use dependent thresholds.
\begin{proposition}  \label{prop:1-1e-best-possible}
	Let $\delta>0$. Then there exists $n\in\N$ and a distribution $F$ such that for any stopping time~$\tau$ for which $\Pr{\tau=i\mid\tau>i-1}$ is independent of~$i$,
	\[
		\E{X_\tau} \leq \left(1-\frac{1}{e}+\delta\right) \cdot \E{\max\{X_1,\dots,X_n\}}.
	\]
\end{proposition}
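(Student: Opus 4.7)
My strategy is to reduce the problem to analyzing single-threshold rules, and then to construct an adversarial distribution for which even the best such threshold cannot beat $1-1/e+\epsilon$.

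\textbf{Step 1 (reduction to threshold rules).} Set $p:=\Pr{\tau=i\mid\tau>i-1}$, which by hypothesis does not depend on $i$, so that $\Pr{\tau\leq n}=1-(1-p)^n$. Since the event $\{\tau>i-1\}$ depends only on $X_1,\ldots,X_{i-1}$ and external randomness, $X_i$ is independent of it, and $\psi_i(x):=\Pr{\tau=i\mid X_i=x,\tau>i-1}$ satisfies $\E{\psi_i(X)}=p$ for $X\sim F$. Writing $\E{X_\tau}=\sum_{i=1}^n (1-p)^{i-1}\E{X\,\psi_i(X)}$ and using the pointwise rearrangement inequality that $\E{X\,g(X)}\leq \E{X\cdot\mathbf{1}[X\geq F^{-1}(1-p)]}$ for every measurable $g:\R_+\to[0,1]$ with $\E{g(X)}=p$, one deduces
\[
\E{X_\tau}\;\leq\;\bigl(1-(1-p)^n\bigr)\cdot\E{X\mid X\geq F^{-1}(1-p)}.
\]
(In the adaptive case one first applies Jensen to the concave function $G(q):=\E{X\cdot\mathbf{1}[X\geq F^{-1}(1-q)]}$ to remove dependence on the history before invoking rearrangement.) Thus it suffices to find $F$ and $n$ such that the right-hand side never exceeds $(1-1/e+\epsilon)\,\E{\max\{X_1,\dots,X_n\}}$.

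\textbf{Step 2 (adversarial distribution).} By Ehsani et al.\ the right-hand side evaluated at $p^*=1-(1/e)^{1/n}$ is at least $(1-1/e)\,\E{\max\{X_1,\dots,X_n\}}$ for \emph{any}\/ $F$, so the infimum of $\sup_p$ over distributions is at least $1-1/e$. My plan is to exhibit a parametric family of heavy-tailed distributions — for instance a truncated Pareto $\Pr{X\geq x}=\min(1,x^{-\alpha})$ on $[1,M]$ with $\alpha$ and $M$ chosen as functions of $\epsilon$ and $n$ — on which the curve
\[
p\;\longmapsto\;\bigl(1-(1-p)^n\bigr)\cdot\E{X\mid X\geq F^{-1}(1-p)}\big/\E{\max\{X_1,\dots,X_n\}}
\]
is sharply peaked around $p^*$ with maximum value arbitrarily close to $1-1/e$. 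In the scaling $c=np$ this amounts to bounding the ratio $(1-e^{-c})\,\alpha\,c^{-1/\alpha}/((\alpha-1)\,\Gamma(1-1/\alpha))$ uniformly in $c>0$, then tuning $\alpha$ (and, if needed, a truncation of $F$) so as to drive this maximum below $1-1/e+\epsilon$.

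\textbf{Main obstacle.} The technical crux is controlling the supremum over all $p$ uniformly, rather than only at a single critical point. Different values of $p$ correspond to qualitatively distinct strategies (``aim high'' with $p$ small, yielding large conditional expectation but rare acceptance; versus ``aim low'' with $p$ approaching $1$, yielding frequent but less valuable acceptances), and the constructed $F$ must simultaneously defeat both. I expect the verification to require an asymptotic expansion in $c=np$ together with explicit estimates on $\E{\max\{X_1,\dots,X_n\}}$, and the right choice of distributional parameters will be the one for which the first-order condition $n p(1-p)^{n-1}\E{X\mid X\geq T}=(1-(1-p)^n)(\E{X\mid X\geq T}-T)$ is satisfied exactly at $p^*$ with the corresponding maximum value equal to $(1-1/e+o(1))\E{\max\{X_1,\dots,X_n\}}$.
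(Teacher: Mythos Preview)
Your Step~1 is correct and is actually a cleaner general formulation than what the paper uses implicitly: the rearrangement-plus-Jensen argument indeed shows that any stopping rule with constant conditional acceptance probability $p$ satisfies $\E{X_\tau}\le (1-(1-p)^n)\cdot\E{X\mid X\ge F^{-1}(1-p)}$, so it suffices to bound the right-hand side uniformly in~$p$.

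The genuine gap is Step~2. You acknowledge this yourself (``Main obstacle''), but more importantly, the specific plan of using a (truncated) Pareto tail is unlikely to succeed. Writing $c=np$ and carrying out your own asymptotics for $\Pr{X\ge x}=x^{-\alpha}$, the ratio to be maximized becomes
\[
\frac{\alpha}{\alpha-1}\cdot\frac{(1-e^{-c})\,c^{-1/\alpha}}{\Gamma(1-1/\alpha)}.
\]
A direct computation of $\sup_{c>0}$ of this quantity gives approximately $0.72$ for $\alpha=2$, $0.82$ for $\alpha=5$, and larger values as $\alpha\to 1^+$ or $\alpha\to\infty$; it never comes close to $1-1/e\approx 0.632$. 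Truncation does not help for the relevant range of~$c$. The Pareto family is simply too ``smooth'' for this purpose: the conditional-expectation-above-threshold is always a constant multiple of the threshold, which makes it too easy for a single well-chosen $p$ to extract significantly more than $1-1/e$ of $\E{\max}$.

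What does work is a carefully tuned discrete distribution with three atoms. The paper takes
\[
X=\begin{cases}\dfrac{\sqrt{n}}{e-2}&\text{w.p.\ }n^{-3/2},\\ 1&\text{w.p.\ }n^{-1/2},\\ 0&\text{otherwise,}\end{cases}
\]
for which $\E{\max\{X_1,\dots,X_n\}}\to \frac{1}{e-2}+1=\frac{e-1}{e-2}$. In your threshold framework with $c=np$, the ratio becomes $(1-e^{-c})\cdot\frac{1+c(e-2)}{c(e-1)}$, whose maximum over $c>0$ is attained at $c=1$ and equals exactly $1-1/e$ (the derivative vanishes there, and the boundary values $c\to 0,\infty$ are both smaller). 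The constant $e-2$ in the top atom is what forces the critical point and the value to coincide in this way; a Pareto tail has no such tuning knob.
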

\begin{proof}
	For $n\geq 3$ and $i\in[n]$, let
\[
	X_i=\begin{cases}
		\frac{\sqrt{n}}{e-2} & \text{with probability $\frac{1}{n^{3/2}}$,} \\
		1 & \text{with probability $\frac{1}{\sqrt{n}}$,} \\
		0 & \text{otherwise.}
	\end{cases}
\]

	We begin by bounding $\EE[\max\{X_1, \dots, X_n\}]$ from below. For every $\varepsilon>0$ there exists $m\in\N$ such that for all $n \geq m$,
\begin{align*}
	&\E{\max\{X_1,\ldots,X_n\}} \\
	&\qquad = \Pr{X_i = \frac{\sqrt{n}}{e-2} \text{ for some $i$}} \cdot \frac{\sqrt{n}}{e-2} \\[1ex]
	&\qquad \phantom{{}=} + \Pr{X_i < \frac{\sqrt{n}}{e-2} \text{ for all $i$}} \cdot \Pr{X_i = 1 \text{ for some $i$} \mid X_i < \frac{\sqrt{n}}{e-2} \text{ for all $i$}} \cdot 1 \\
	&\qquad = \Bigl(1-\Bigl(1-\frac{1}{n^{3/2}}\Bigr)^n\Bigr) \cdot \frac{\sqrt{n}}{e-2} + \Bigl(1-\frac{1}{n^{3/2}}\Bigr)^n \cdot \Bigl(1-\Bigl(\frac{1-\frac{1}{\sqrt{n}}-\frac{1}{n^{3/2}}}{1-\frac{1}{n^{3/2}}}\Bigr)^n\Bigr) \cdot 1 \\
	&\qquad \ge \frac{1}{e-2} + 1 - \varepsilon
	= \frac{e-1}{e-2} - \varepsilon	.
\end{align*}
\vspace*{-6pt}

In bounding $\E{X_{\tau}}$ from above, we can restrict attention to stopping rules that always accept a value of $\frac{\sqrt{n}}{e-2}$ and never accept a value of~$0$. Given the property that $\Pr{\tau=i\mid\tau>i-1}$ is independent of~$i$, any such stopping rule is characterized by the probability with which it accepts a value of~$1$. Denoting this probability, which may depend on $n$, by $q_n$, and the corresponding stopping time by $\tau_{q_n}$,
\begin{align*}
	\E{X_{\tau_{q_n}}}
	&= \sum_{i=1}^{n} \Pr{\tau_{q_n} > i-1} \cdot \Pr{\tau_{q_n} = i \mid \tau_{q_n} > i-1} \cdot \E{X_i \mid \tau_{q_n} = i} \\[.5ex]
	&= \sum_{i = 1}^{n} \biggl(1-\frac{q_n}{\sqrt{n}}-\frac{1}{n^{3/2}}\biggr)^{i-1} \biggl( \frac{q_n}{\sqrt{n}} \cdot 1 + \frac{1}{n^{3/2}} \cdot \frac{\sqrt{n}}{e-2}\biggr) \\[.5ex]
	&= \biggl(1-\biggl(1-\frac{q_n}{\sqrt{n}}-\frac{1}{n^{3/2}}\biggr)^n\biggr) \biggl(\frac{q_n/\sqrt{n}}{q_n/\sqrt{n}+1/n^{3/2}}\cdot 1 + \frac{1/n^{3/2}}{q_n/\sqrt{n}+1/n^{3/2}}\cdot \frac{\sqrt{n}}{e-2}\biggr) \\[1ex]
	&= \biggl(1-\biggl(1-\frac{q_n\sqrt{n}+1/\sqrt{n}}{n}\biggr)^n\biggr) \frac{q_n\sqrt{n}+1/(e-2)}{q_n\sqrt{n}+1/\sqrt{n}}.
\end{align*}
We now distinguish three cases depending on the limit behavior of $q_n\sqrt{n}$.

If $\lim\sup_{n\rightarrow\infty}q_n\sqrt{n}=\infty$, then for every $\varepsilon>0$ and $m\in\N$, there exists $n\geq m$ such that
\begin{align*}
	\E{X_{\tau_{q_n}}}
	&= \biggl(1-\biggl(1-\frac{q_n\sqrt{n}+1/\sqrt{n}}{n}\biggr)^n\biggr) \frac{q_n\sqrt{n}+1/(e-2)}{q_n\sqrt{n}+1/\sqrt{n}} \\[1ex]
	&\leq \frac{q_n\sqrt{n}+1/(e-2)}{q_n\sqrt{n}+1/\sqrt{n}} \\[1ex]
	&\leq 1 + \varepsilon.
\end{align*}
Indeed, the first inequality holds because $(1-(1-\frac{1}{n}(q_n\sqrt{n}+1/\sqrt{n}))^n)\leq1$, and the second inequality because $\lim\sup_{n\rightarrow\infty} q_n \sqrt{n} = \infty$.

If $\lim\inf_{n\rightarrow\infty}q_n\sqrt{n}=0$, then for every $\varepsilon>0$ and every $m\in\N$, there exists $n\geq m$ such that
\begin{align*}
	\E{X_{\tau_{q_n}}}
	&= \biggl(1-\biggl(1-\frac{q_n\sqrt{n}+1/\sqrt{n}}{n}\biggr)^n\biggr) \frac{q_n\sqrt{n}+1/(e-2)}{q_n\sqrt{n}+1/\sqrt{n}} \\[.5ex]
	&\leq (q_n \sqrt{n} + 1/\sqrt{n}) \frac{q_n\sqrt{n}+1/(e-2)}{q_n\sqrt{n}+1/\sqrt{n}} \\[.5ex]
	&\leq \frac{1}{e-2} + \varepsilon .
\end{align*}
For the first inequality we have used that for $y=q_n\sqrt{n}+1/\sqrt{n}$, and by Bernoulli's inequality, $1-(1-y/n)^n\leq y$ provided that $y/n\leq 1$, which is satisfied because $q_n\leq 1$ and $n\geq 3$. The second inequality holds because $\lim\inf_{n\rightarrow\infty}q_n\sqrt{n}=0$.

Finally, if $\lim\sup_{n \rightarrow \infty}q_n\sqrt{n}<\infty$ and $\lim\inf_{n \rightarrow\infty}q_n\sqrt{n}>0$, then there exists constants $c_1,c_2$ with $0<c_1\leq c_2$ and infinitely many values of $n$ such that $q_n\sqrt{n}+1/\sqrt{n}\in[c_1,c_2]$. Then, for every $\varepsilon>0$ and $m\in\N$, and for $\varepsilon'=\varepsilon\cdot(c_2+1/(e-2))/c_1$, there exists $n$ with $n\geq m$ and $q_n\sqrt{n} 1/\sqrt{n}\in[c_1,c_2]$ such that
\begin{align*}
	\E{X_{\tau_{q_n}}}
	&= \biggl(1-\biggl(1-\frac{q_n\sqrt{n}+1/\sqrt{n}}{n}\biggr)^n\biggr) \frac{q_n\sqrt{n}+1/(e-2)}{q_n\sqrt{n}+1/\sqrt{n}} \\[1ex]
	&\leq \bigl(1-e^{-(q_n\sqrt{n}+1/\sqrt{n})}+\varepsilon'\bigr) \frac{q_n\sqrt{n}+1/(e-2)}{q_n\sqrt{n}+1/\sqrt{n}} \\[1ex]
	&\leq \bigl(1-e^{-(q_n\sqrt{n}+1/\sqrt{n})}\bigr) \frac{q_n\sqrt{n}+1/(e-2)}{q_n\sqrt{n}+1/\sqrt{n}} + \varepsilon \\[1ex]
	&\leq \max_{x \geq 0} \biggl(\bigl(1-e^{-x}\bigl) \frac{x+1/(e-2)}{x} \biggr)+ \varepsilon \\[1ex]
	&\leq \frac{(e-1)^2}{(e-2)e} + \varepsilon.
\end{align*}
For the first inequality we have used that for every $\varepsilon>0$ there exists $m\in\N$ such that for all $n\geq m$ and all $x\in[c_1,c_2]$ it holds that $1-(1-x/n)^n\leq 1-e^{-x}+\varepsilon$. The last inequality holds because the maximum of $(1-e^{-x}) \cdot (x+1/(e-2))/x$ is attained at $x=1$, where it is equal to $(e-1)^2/((e-2)e)$.

For every stopping time $\tau$ such that $\Pr{\tau=i\mid\tau>i-1}$ is independent of~$i$, and for every $\varepsilon\geq 0$ and $m\in\N$, there thus exists $n\geq m$ such that
\begin{align*}
	\E{X_{\tau}} \leq \max \biggl\{ 1 + \varepsilon, \frac{1}{e-2} + \varepsilon, \frac{(e-1)^2}{(e-2)e} + \varepsilon \biggr\} = \frac{(e-1)^2}{(e-2)e} + \varepsilon.
\end{align*}

Let $f:\R\to\R$ such that for every $\varepsilon\geq 0$,
\[
	f(\varepsilon) = \frac{\frac{(e-1)^2}{(e-2)e}+\varepsilon}{\frac{e-1}{e-2}-\varepsilon}.
\]
Then, for every $\varepsilon>0$, there exists $n\in N$ such that
\begin{align*}
	\E{X_{\tau}} / \E{\max\{X_1,\ldots,X_n\}} &\leq f(\varepsilon) .
\end{align*}
Since $f$ is continuous and $\lim_{\varepsilon\to 0}=1-\frac{1}{e}$ there exists, for every $\delta>0$, a value $\varepsilon>0$ such that $f(\varepsilon)\leq 1-\frac{1}{e}+\delta$, and thus $n\in\N$ such that
\begin{align*}
	\E{X_{\tau}} / \E{\max\{X_1,\ldots,X_n\}} &\leq 1-\frac{1}{e}+\delta,
\end{align*}
as claimed.
\end{proof}

Algorithms in the second class have access to $n-1$ samples $S_1,\dots,S_{n-1}$ from the underlying distribution and satisfy the following two natural conditions: (i)~if the value of the first random variable~$X_1$ is greater than all $n-1$ samples, they stop; and (ii)~conditioned on reaching~$X_i$, their probability of stopping at~$X_i$ is non-decreasing in~$i$. It is again easily verified that \autoref{alg:fresh-looking-samples} belongs to this class.
\begin{proposition}  \label{prop:alternative-upper-bound}
Let $\delta>0$. Then there exists $n\in\N$ and a distribution~$F$ such that for any $(n-1,n)$-stopping rule with stopping time~$\tau$ that satisfies Conditions~(i) and~(ii),
\[
	\E{X_\tau} \leq \left(1-\frac{1}{e}+\delta\right) \cdot \E{\max\{X_1,\dots,X_n\}}.
\]
\end{proposition}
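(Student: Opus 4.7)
The plan is to exhibit a specific distribution $F$ (depending on $n$ and $\epsilon$) and to argue that no $(n-1,n)$-stopping rule satisfying conditions (i) and (ii) can exceed the claimed ratio. The argument has two ingredients: a structural part that uses the two conditions to pin down per-step stopping probabilities, and a distributional part that makes the resulting bound tight.

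The first step is to leverage the exchangeability of $X_1$ and $S_1,\dots,S_{n-1}$. For any continuous distribution $F$, the event $\{X_1 > \max\{S_1,\dots,S_{n-1}\}\}$ has probability exactly $1/n$, so condition~(i) forces $q_1 := \Pr[\tau=1] \geq 1/n$. Condition~(ii) then propagates this to $q_i := \Pr[\tau=i \mid \tau \geq i] \geq 1/n$ for every $i\in[n]$, which gives $\Pr[\tau \geq i] = \prod_{j<i}(1-q_j) \leq (1-1/n)^{i-1}$. Moreover, on the forced-stopping event of~(i), $X_1$ is distributed as the maximum of $n$ i.i.d.\@ copies of $F$, so this contribution to $\E[X_\tau]$ is exactly $(1/n)\,\E[\max\{X_1,\dots,X_n\}]$, precisely matching the per-step contribution of Algorithm~\ref{alg:fresh-looking-samples}.

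The second step is to argue that any additional stopping---beyond what is forced by (i)---cannot help. I would write $\E[X_\tau] = \sum_{i=1}^n \Pr[\tau \geq i]\cdot q_i \cdot \E[X_i \mid \tau = i]$ and bound each per-step product $q_i\cdot \E[X_i\mid\tau=i]$ by the optimal threshold contribution $\E[X\cdot \mathbf{1}_{X > F^{-1}(1-q_i)}]$, which is the maximum over all stopping rules at step~$i$ with stopping probability $q_i$. I would then choose $F$ so that this quantity, viewed as a function of $q\in[1/n,1]$, is maximized at $q=1/n$ with value essentially $(1/n)\,\E[\max\{X_1,\dots,X_n\}]$. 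Summing and using $\Pr[\tau\geq i]\leq (1-1/n)^{i-1}$ would then yield $\E[X_\tau] \leq (1-(1-1/n)^n)\,\E[\max] \leq (1-1/e+\epsilon)\,\E[\max]$ for $n$ sufficiently large.

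The main obstacle is constructing the right distribution. The naive candidates (e.g.\@ two-point $\{0,H\}$ with $\Pr[X=H]=1/n$, or bounded-support distributions like the uniform) fail: a rule that stops only on the top value can achieve ratio close to $1$ while still respecting (i) and (ii). I would therefore tailor $F$ so that being ``more selective'' than $q_i=1/n$ is impossible without violating the lower bound, while being ``less selective'' strictly dilutes the per-stop expectation. A natural candidate is a continuous distribution analogous to those used in the Hill--Kertz upper bound for the classical i.i.d.\@ prophet inequality, tuned so that the optimal prophet algorithm (with knowledge of $F$) itself has roughly the Algorithm~\ref{alg:fresh-looking-samples} structure, and so that $n-1$ samples are insufficient to distinguish it from a wide family of distributions with different optimal thresholds. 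Pinning down such an $F$ and verifying the per-step bound rigorously is the technical crux; once accomplished, the remainder of the argument follows by the algebraic manipulation sketched above.
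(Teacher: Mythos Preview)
Your structural step is exactly right and matches the paper: condition~(i) forces $q_1\ge 1/n$, condition~(ii) propagates this to $q_i\ge 1/n$, and hence $\Pr{\tau\ge i}\le(1-1/n)^{i-1}$. The gap is in the distributional step. You discard the two-point distribution after trying $\Pr{X=H}=1/n$, but the paper uses precisely a two-point distribution with the much smaller mass $\Pr{X=1}=\eps/n$ (plus tiny perturbations for continuity). The whole argument then becomes elementary: with overwhelming probability there is at most one ``$1$'' among the $2n-1$ samples and values, and conditioned on that ``$1$'' sitting at position $X_k$, the rule---which by (i)+(ii) stops at each step with conditional probability at least $1/n$ regardless of what it sees---has already stopped on a ``$0$'' before reaching $k$ with probability at least $1-(1-1/n)^{k-1}$. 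Averaging over $k$ uniform in $[n]$ gives a miss probability of roughly $1/e$, hence the $1-1/e$ bound.

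In your own framework this distribution does exactly what you were looking for: with $\Pr{X=1}=\eps/n<1/n$, the function $g(q)=\E{X\cdot\mathbf 1_{X>F^{-1}(1-q)}}$ equals $\eps/n$ for every $q\ge 1/n$, so the per-step contribution is constant and your telescoping sum gives $\E{X_\tau}\le(\eps/n)\sum_i(1-1/n)^{i-1}\approx\eps(1-1/e)$, while $\E{\max}\approx\eps$. Two remarks: first, $g$ is always nondecreasing in $q$, so ``maximized at $q=1/n$'' cannot be what you want---you need $g$ \emph{constant} on $[1/n,1]$, which simply means all the value-mass of $F$ sits in the top $1/n$ quantile. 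Second, the Hill--Kertz and sample-indistinguishability ideas are unnecessary here; conditions~(i) and~(ii) already pin down enough structure that a near-degenerate two-point law suffices.
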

\begin{proof}
Let $\varepsilon>0$. Let $Y_i$ be distributed uniformly on $[0,\varepsilon]$, and let $X_i=1+Y_i$ with probability $1/n^2$ and $X_i=Y_i$ otherwise. Consider a stopping rule $\vr$ with stopping time $\tau$ that satisfies Conditions~(i) and~(ii). Then $\Pr{\tau=1}\geq 1/n$ and $\Pr{\tau=i\mid\tau\geq i}\geq\Pr{\tau=1}\geq 1/n$. Moreover, it is easy to see by induction that $\Pr{\tau\geq i} \leq (1-1/n)^{i-1}$. Indeed, $\Pr{\tau\geq 1}=1$; for $i>1$, and assuming that the claim holds for $i-1$,
\[
	\Pr{\tau \geq i} = \Pr{\tau \geq i-1} \cdot \Pr{\tau \neq i-1 \mid \tau \geq i-1} \leq (1-1/n)^{i-2} \cdot (1-1/n) = (1-1/n)^i.
\]
It follows that
\[
	\E{X_\tau}\leq\sum_{i = 1}^{n} \Pr{\tau \geq i} \cdot \frac{1}{n^2} + \varepsilon \leq \sum_{i=1}^{n} \biggl(1-\frac{1}{n}\biggr)^{i-1} \frac{1}{n^2} + \varepsilon = \frac{1}{n} \biggl(1-\biggl(1-\frac{1}{n}\biggr)^n\biggr) + \varepsilon.
 \]
On the other hand
\[
	\E{\max\{X_1, \dots, X_n\}} \geq 1-\biggl(1-\frac{1}{n^2}\biggr)^n,
\]
and therefore
\[
	\frac{\E{X_\tau}}{\E{\max\{X_1, \dots, X_n\}}} \leq \frac{\frac{1}{n} \Bigl(1-\bigl(1-\frac{1}{n}\bigr)^n\Bigr)+\varepsilon}{1-\bigl(1-\frac{1}{n^2}\bigr)^n}.
\]
The right-hand side tends to $1-1/e$ as $n\rightarrow\infty$ and $\varepsilon\to 0$, so for every $\delta>0$ there exists $n\in\N$ such that $\E{X_\tau}\leq (1-1/e+\delta) \cdot \E{\max\{X_1, \dots, X_n\}}$.
\end{proof}

\subsection{A Parametric Lower Bound}

The lower bound of \autoref{thm:fresh-looking-samples} can be generalized to a situation with $\gamma n$ samples when $\gamma\in[0,1]$. The idea is to interpret some of the values $X_1,\dots,X_n$ as samples, so that the number of remaining values equals the number of samples and \autoref{alg:fresh-looking-samples} can be applied.
\begin{corollary}\label{cor:parametric-lower}
	Let $X_1,X_2,\dots,X_n$ be i.i.d.\@ random variables from an unknown distribution $F$. Let $\gamma\in[0,1]$ such that $\gamma n+n$ is an even number. Then there exists an $(\gamma n,n)$-stopping-rule with stopping time~$\tau$ such that
\begin{align*}
	\E{X_\tau} \geq \frac{1+\gamma}{2}\cdot\left(1-\frac 1e\right) \cdot \E{\max\{X_1, \dots, X_n\}}.
\end{align*}
\end{corollary}
\begin{proof}
	Let $n'=\frac{1+\gamma}{2}n$, and note that $n'\in\N$. Define $S'_i=S_i$ for all $i\in[\gamma n]$, $S'_{\gamma n+i}=X_{i}$ for all $i\in[n'-\gamma n]$, and $X'_i=X_{n'-\gamma n+i}$ for all $i\in[n']$. Note that $X'_{n'}=X_n$, so this assignment is well-defined. We use \autoref{alg:fresh-looking-samples} with stopping time $\tau$ on $X'_1,\dots,X'_{n'}$ with samples $S'_1,\dots,S'_{n'-1}$. Then by applying \autoref{thm:fresh-looking-samples} we get 
	\[\E{X'_\tau}\geq \left(1-\frac 1e\right) \cdot \E{\max\{X'_1, \dots, X'_{n'}\}}\geq \frac{1+\gamma}{2}\cdot\left(1-\frac 1e\right) \cdot \E{\max\{X_1, \dots, X_n\}},\]
	as claimed.
\end{proof}

\subsection{A Parametric Upper Bound} \label{subsec:parametric}

\begin{figure}
\centering
\hspace*{-35pt}
\begin{tikzpicture}
\begin{axis}[ width=0.55\textwidth,height=0.42\textwidth,xlabel=$\gamma$, axis x line = left, axis y line = left, xmin = 0, xmax = 1.32, ymin = 0, ymax = 0.8, font=\small, line width = 1pt] 
\addplot[no marks,very thick,domain=0:0.58197,dashed] {(1+x)/exp(1)}; 
\addplot[no marks,very thick,domain=0.58197:1.32,dashed] {-x*ln(x/(1+x))};
\addplot[no marks,very thick,domain=0:0.164] {0.3678};
\addplot[no marks,very thick,domain=0.164:1] {0.316*x+0.316};
\addplot[no marks,very thick,domain=1:1.32] {0.6321};
\end{axis}
\end{tikzpicture}
\caption{Visualization of the parametric lower bound (solid) and the parametric upper bound (dashed).}\label{fig:parametric}
\end{figure}

While an improvement over the bound of $1-1/e\approx 0.632$ remains conceivable via more complicated stopping rules, such an improvement cannot go beyond $\ln(2)\approx 0.693$. This is a consequence of the following generalization of \autoref{thm:upper}, which provides a parametric upper bound for stopping rules that have access to $\gamma\,n$ samples for some $\gamma\geq 0$. We prove this result by generalizing the proof of \autoref{thm:upper}, and bounding the performance of the algorithm by bounding the probability that it accepts the maximum of the entire sequence of $(1+\gamma)n$ values.
\begin{theorem}
	\label{thm:upper_par}
		Let $\delta>0$, $\gamma\in\mathbb{Q}_{+}$. Then there exists $n_0\in\N$ such that for any $n\geq n_0$ and any $(\gamma\, n,n)$-stopping rule with stopping time $\tau$ there exists a distribution $F$, not known to the stopping rule, with the following property. When $X_1,\dots,X_n$ are i.i.d.\@ random variables drawn from~$F$,
	\begin{align*}
		\EE[X_\tau] \leq (b(\gamma)+\delta) \cdot \EE[\max\{X_1,\dots,X_n\}]_{},
	\end{align*}
	where
	\[ 
	b(\gamma)=\begin{cases}
	\frac{1+\gamma}e & \text{if }\frac1e\geq \frac\gamma{1+\gamma}, \\
	-\gamma\cdot\log\frac{\gamma}{1+\gamma} & \text{otherwise.}
\end{cases}
\]
\end{theorem}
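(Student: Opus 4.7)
My plan is to extend the argument of \thmref{thm:upper} to incorporate $\gamma n$ samples, ultimately reducing the problem to a restricted version of the secretary problem whose optimal success probability is exactly $b(\gamma)$.

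First, I would extend \lemref{lem:structure1} to $(\gamma n,n)$-stopping rules: for any $\eps>0$ and any $(\gamma n,n)$-stopping rule with guarantee $\alpha$, there is a stopping rule $\vr$ with the same guarantee and an infinite set $V\subseteq\mathbb{N}$ such that $r_i(s_1,\dots,s_{\gamma n},v_1,\dots,v_i)$ lies in an $\eps$-interval around some $q_i\in[0,1]$ whenever $s_1,\dots,s_{\gamma n},v_1,\dots,v_i\in V$ are pairwise distinct and $v_i>\max\{v_1,\dots,v_{i-1}\}$. The proof tracks that of \lemref{lem:structure1} closely: one first passes to an order-oblivious rule (now also symmetric in its sample arguments, which is without loss of generality because samples are i.i.d.) and then performs $n$ rounds of Ramsey-theoretic refinement, with the $i$-th round colouring hyperedges whose size is $\gamma n+i-1$ to account for the additional sample and value arguments.

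Second, I would reuse the hard instance from the proof of \thmref{thm:upper} essentially verbatim: set $N=(1+\gamma)n$, $\eps=1/N^2$, and pick pairwise distinct $v_1,\dots,v_{N^3},u\in V$ with $u\geq N^3\max_j v_j$. Let the distribution assign mass $1/N^2$ to $u$ and uniform small mass to each $v_j$. On the event $E$ that $u$ appears exactly once among the $N$ draws (samples and actuals combined) and that all $N$ draws are pairwise distinct -- an event of probability $1-o(1)$ -- the behaviour of $\vr$ is determined up to additive $o(1)$ by the ranks of the draws alone, via the same coupling argument leading to~\eqref{eq:thm2-unionbound}.

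Third, I would compare the algorithm to the \emph{restricted} secretary problem in which one sees $N$ i.i.d.\ draws in sequence but may only stop in the last $n$ positions, with the goal of stopping at the global maximum. Let $\rho^\ast$ denote the optimal rank-based success probability of this problem. Coupling $\vr$ to a purely rank-based rule $\hat{\vr}$ (that uses only the $q_i$'s derived from value-obliviousness) costs only $n\eps=o(1)$, so $\PP[X_\tau=u\mid E]\leq \rho^\ast+o(1)$. Since $\PP[E]=(1+o(1))(1+\gamma)/n$ and $\max_j v_j\leq u/N^3$, one obtains $\EE[X_\tau]\leq u\cdot \PP[X_\tau=u]+u/N^3\leq u(1+\gamma)\rho^\ast/n+o(u/n)$, while $\EE[\max\{X_1,\dots,X_n\}]\geq u\cdot\PP[u\text{ appears in actuals}]\geq u(1-o(1))/n$. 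Dividing yields $\EE[X_\tau]/\EE[\max\{X_1,\dots,X_n\}]\leq (1+\gamma)\rho^\ast+o(1)$.

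Fourth, I would evaluate $\rho^\ast$. The optimal rank-based strategy observes the first $K$ items and then stops at the first subsequent leader, where $K\geq\gamma n$, yielding success probability $(K/N)\sum_{j=K}^{N-1}1/j\to-\alpha\log\alpha$ for $K=\alpha N$. Maximising $-\alpha\log\alpha$ over $\alpha\in[\gamma/(1+\gamma),1]$ gives $\rho^\ast=1/e$ when the unconstrained optimum $\alpha=1/e$ is feasible (i.e.\ $\gamma/(1+\gamma)\leq 1/e$) and $\rho^\ast=-\gamma/(1+\gamma)\log(\gamma/(1+\gamma))$ otherwise. Multiplying by $1+\gamma$ recovers exactly the two cases of $b(\gamma)$; choosing $n$ large enough absorbs the $o(1)$ slack into $\delta$. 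The main obstacle is the extension of \lemref{lem:structure1}: the Ramsey-theoretic induction is conceptually unchanged, but sample arguments must be tracked everywhere and the $i$-th inductive step handles a ``context'' of size $\gamma n+i-1$. Once value-obliviousness on both samples and actuals is in hand, the rest of the argument is a careful reprise of the proof of \thmref{thm:upper} combined with the classical analysis of the secretary problem with a mandatory observation prefix.
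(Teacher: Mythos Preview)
Your overall strategy mirrors the paper's exactly: extend value-obliviousness to handle samples, reuse the hard instance from \thmref{thm:upper}, and reduce to the secretary problem with a mandatory observation prefix of length $\gamma n$, whose optimal success probability yields precisely $b(\gamma)/(1+\gamma)$. The computation of $\rho^\ast$ and the final arithmetic are correct.

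There is, however, a genuine slip in your formulation of the extended structural lemma. You require $r_i(s_1,\dots,s_{\gamma n},v_1,\dots,v_i)\approx q_i$ whenever $v_i>\max\{v_1,\dots,v_{i-1}\}$, with a \emph{single} $q_i$ independent of how $v_i$ compares to the samples. This is false and cannot be obtained by Ramsey: the rule ``stop iff $v_i>\max\{s_1,\dots,s_{\gamma n},v_1,\dots,v_{i-1}\}$'' depends only on ranks, hence is unchanged on every infinite $V$, yet on any $V$ with three elements one can realise both $r_i=0$ and $r_i=1$ under your condition. Worse, if your condition did hold, the coupled rule $\hat\vr$ would ignore the samples entirely, so you would \emph{not} be reducing to the restricted secretary problem you describe but to the ordinary $n$-value secretary problem, giving the wrong bound for $\gamma>e-1$.

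The fix is simple and is exactly what the Ramsey induction with $(\gamma n+i-1)$-hyperedges actually delivers: replace your condition by $v_i>\max\{s_1,\dots,s_{\gamma n},v_1,\dots,v_{i-1}\}$. With that change your reduction to the restricted secretary problem goes through verbatim. The paper sidesteps the whole issue by (implicitly) viewing the $(\gamma n,n)$-rule as a $(0,(1+\gamma)n)$-rule forced to reject the first $\gamma n$ values and applying \lemref{lem:structure1} as stated; value-obliviousness then automatically refers to the running maximum of the full sequence $R_1,\dots,R_{(1+\gamma)n}$, and the forced-rejection constraint becomes $q_i=0$ for $i\le\gamma n$. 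That framing avoids the need to re-prove the structural lemma with extra sample arguments.
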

\begin{proof}
We will restrict attention to $n\in\N$ such that $\gamma n\in\N$. It suffices to show that the guarantee of any $(\gamma n,n)$-stopping rule is bounded from above by $b(\gamma)+o(1)$, where implicitly $n\rightarrow\infty$. To this end, consider an arbitrary $(\gamma n,n)$-stopping rule with guarantee $\alpha$.
Let $\varepsilon=1/n^2$. By \autoref{lem:structure1} there then exists an infinite set $V\subseteq\N$ on which~$\vr$ is $\varepsilon$-value-oblivious. Let $v_1,\dots,v_{n^3},u\in V$ be pairwise distinct such that $u\geq n^3\max\{v_1,\dots,v_{n^3}\}$. For each $i\in[n]$, let
\[
	X_i = \begin{cases}
		v_1 & \text{with probability $\frac{1}{n^3}\cdot(1-\frac{1}{n^2})$,} \\
		\vdots & \\
		v_{n^3} &\text{with probability $\frac{1}{n^3}\cdot(1-\frac{1}{n^2})$,} \\
		u &\text{with probability $\frac{1}{n^2}$.}
	\end{cases}
\]

We proceed to bound $\EE[\max\{X_1,\dots,X_n\}]$ from below and $\EE[X_\tau]$ from above. Let $X_{(i)}$ denote $i$th order statistic of $X_1,\dots,X_n$, such that $X_{(n)}=\max\{X_1,\dots,X_n\}$. Then
\begin{equation}\label{eq:thm4-opt}
	\EE[\max\{X_1,\dots,X_n\}] \geq \PP[X_{(n)}=u]\cdot u=\frac{1-o(1)}{n}\cdot u.
\end{equation}
For $i\in[(1+\gamma)n]$, let
\begin{align*}
	R_i = \begin{cases}
		S_i & \text{if $i\leq \gamma n$,} \\
		X_{i-\gamma n} & \text{otherwise,}
	\end{cases}
\end{align*} and let $\tau'=\tau+\gamma n$ be the stopping time of~$\vr$ viewed as a $(0,(1+\gamma)n)$-stopping rule on $R_1,\dots,R_{(1+\gamma)\cdot n}$. Then, analogously to the proof of \autoref{thm:upper} 
\begin{align}
	\EE[X_\tau] &= \PP[R_{((1+\gamma)\cdot n)}=u\wedge R_{((1+\gamma)\cdot n-1)}\neq u]\cdot\EE[R_{\tau'}\mid R_{((1+\gamma)\cdot n)}=u\wedge R_{((1+\gamma)\cdot n-1)}\neq u] \nonumber \\[.5ex]
	& \phantom{={}} + \PP[R_{((1+\gamma)\cdot n)}=u\wedge R_{((1+\gamma)\cdot n-1)}=u]\cdot\EE[R_{\tau'}\mid R_{((1+\gamma)\cdot n)}=u\wedge R_{((1+\gamma)\cdot n-1)}=u] \nonumber \\[.5ex]
	& \phantom{={}} +\PP[R_{((1+\gamma)\cdot n)}\neq u]\cdot\EE[R_{\tau'}\mid R_{((1+\gamma)\cdot n)}\neq u] \nonumber \\[1ex]
	&\leq \frac{1+\gamma}{n}\cdot\Bigl(\PP[R_{\tau'}=R_{((1+\gamma)\cdot n)}\mid R_{((1+\gamma)\cdot n)}=u\wedge R_{((1+\gamma)\cdot n-1)}\neq u]\cdot u \nonumber\\[-1ex]
	& \phantom{=\frac{1}{n}\cdot\big({}}+\PP[R_{\tau'}\neq R_{((1+\gamma)\cdot n)}\mid R_{((1+\gamma)\cdot n)}=u\wedge R_{((1+\gamma)\cdot n-1)}\neq u]\cdot O(n^{-3})\cdot u\Bigr) \nonumber\\
	& \phantom{={}} + O(n^{-2})\cdot u+1\cdot O(n^{-3})\cdot u\nonumber\\[1ex]
 	&\leq \frac{1+\gamma}{n}\cdot \PP[R_{\tau'}=R_{((1+\gamma)\cdot n)}\mid R_{((1+\gamma)\cdot n)}=u\wedge R_{((1+\gamma)\cdot n-1)}\neq u]\cdot u +o\left(\frac 1n\right)\cdot u\nonumber \\
  & \leq\frac{1+\gamma}{n}\cdot \PP[R_{\tau'}=R_{((1+\gamma)\cdot n)}\mid R_{((1+\gamma)\cdot n)}=u\wedge R_1,\dots,R_{(1+\gamma)\cdot n}\text{ are distinct}]\cdot u \nonumber\\
	& \phantom{={}} + o\left(\frac 1n\right)\cdot u.\label{eq:thm4-alg}
\end{align}

Given~\eqref{eq:thm4-opt} and~\eqref{eq:thm4-alg}, to show $\alpha\leq b(\gamma)+o(1)$, it suffices to show that
\begin{equation}\label{eq:thm4-goal}
	\PP[R_{\tau'}=R_{((1+\gamma)\cdot n)}\mid R_{((1+\gamma)\cdot n)}=u\wedge R_1,\dots,R_{(1+\gamma)\cdot n}\text{ are distinct}]\leq b(\gamma)/(1+\gamma)+o(1).
\end{equation}
Note that in the event where $R_{((1+\gamma)\cdot n)}=u$ and $R_1,\dots,R_{(1+\gamma)\cdot n}\text{ are distinct}$, the relative ranks of $R_1,\dots,R_{(1+\gamma)\cdot n}$ are distributed uniformly at random. The optimal stopping rule for accepting the value with the largest relative rank is known to set, for some $x\in[0,1]$, $q_i=0$ for all $i<x\cdot(1+\gamma)\cdot n$ and $q_i=1$ for all $i\geq x\cdot(1+\gamma)\cdot n$~\citep{GiMo66a}. Then, for any $(0,(1+\gamma)\cdot n)$-stopping rule $\hat{\vr}$ with stopping time $\hat{\tau}$ that does not accept any of the values $X_1,\dots,X_{\gamma n}$, 
\[\PP[X_{\hat{\tau}}=R_{((1+\gamma)\cdot n)}\mid R_{((1+\gamma)\cdot n)}=u\wedge R_1,\dots,R_{(1+\gamma)\cdot n}\text{ are distinct}]=-x\cdot\log x+o(1),
\]
which subject to $x\geq \gamma/(1+\gamma)$ is maximized for
\[ x=\max\left\{\frac1e,\frac\gamma{1+\gamma}\right\}. \] 
Thus
\begin{equation}\label{eq:thm4-secretary}
	\PP[X_{\hat{\tau}}=R_{((1+\gamma)\cdot n)}\mid R_{((1+\gamma)\cdot n)}=u\wedge R_1,\dots,R_{(1+\gamma)\cdot n}\text{ are distinct}]\leq b(\gamma)/(1+\gamma)+o(1),
\end{equation}
where 
\[ b(\gamma)=\begin{cases}
	\frac{1+\gamma}e & \text{if }\frac1e\geq \frac\gamma{1+\gamma}, \\
	-\gamma\cdot\log\frac{\gamma}{1+\gamma} & \text{otherwise.}
\end{cases} \]
Analogously to the proof of \autoref{thm:upper}, $\vr$ can be coupled with a $(0,(1+\gamma)\cdot n)$-stopping rule $\hat{\vr}$ as above such that
\begin{multline}
	\PP[X_{\tau'}=R_{((1+\gamma)\cdot n)}\mid R_{((1+\gamma)\cdot n)}=u\wedge R_1,\dots,R_{(1+\gamma)\cdot n}\text{ are distinct}] 
	\leq \\[1ex] \PP[X_{\hat{\tau}}=R_{((1+\gamma)\cdot n)}\mid R_{((1+\gamma)\cdot n)}=u\wedge R_1,\dots,R_{(1+\gamma)\cdot n}\text{ are distinct}]+n\varepsilon.\label{eq:thm4-coupling}
\end{multline}
Then substituting \eqref{eq:thm4-coupling} into \eqref{eq:thm4-secretary} yields \eqref{eq:thm4-goal}, completing the proof.
\end{proof}

A visualization of the upper bound of \autoref{thm:upper_par} and the lower bound of \autoref{cor:parametric-lower} is shown in \autoref{fig:parametric}. Note that the function~$b$ defined in \autoref{thm:upper_par} is continuous and that $b(1)=\ln(2)\approx 0.693$. Moreover, $b$ meets the tight bound of approximately~$0.745$ due to \citet{CorreaFHOV17}, which implies an upper bound in the setting where the distribution is unknown, at $\gamma\approx 1.32$.

\section{Superlinear Number of Samples}  \label{sec:745}

Our final result is that we can in fact match the optimal guarantee achievable by a stopping rule that knows the distribution, up to any $\varepsilon>0$, if we have access to $O_\varepsilon(n^2)$ samples.

\begin{theorem}\label{thm:quadratic}
For every $\varepsilon>0$, there exists an $n_\varepsilon\in\N$ such that the following holds for all $n\geq n_\varepsilon$. Let $X_1, \dots, X_n$ be i.i.d.\@ random variables drawn from an unknown distribution $F$. Then there exists an algorithm for choosing a stopping time $\tilde{\tau}$ that uses $O(n^2)$ samples from the same distribution with
\[
	\E{X_{\tilde{\tau}}} \geq (\beta^{-1}-\varepsilon) \cdot \E{\max\{X_1, \dots, X_n\}}, 
\]
where $\beta^{-1}\approx 0.745$ is the guarantee shown by \citet{CorreaFHOV17}.
\end{theorem}

As our algorithm is related to that of \citet{CorreaFHOV17}, we first recall how that algorithm works. It computes a decreasing sequence $x_1,x_2,\dots,x_n$. As $n\rightarrow\infty$, it can be shown that $x_i$ approaches $y(i/n)^{1/(n-1)}$ pointwise for all $i\in[n]$~\citep[Theorem C]{Kertz86}, where $y$ is the unique solution to the following ordinary differential equation
\[
	y' = y\cdot\ln(y) - y - \beta +1 \quad \text{and} \quad y(0) = 1,
\]
where $\beta \approx 1.3414 \approx 1/0.745$. This solution turns out to be decreasing and convex. Then, conditional on reaching random variable $X_i$, it chooses a quantile $q_i\in[1-x_{i-1},1-x_i]$ according to the probability density function $$f_i(q)=\frac{(n-1) (1-q)^{n-2}}{\alpha_i}\quad\text{where}\quad\alpha_i=\int_{1-x_{i-1}}^{1-x_i} (n-1) (1-r)^{n-2}\;\mathrm{d}r\,,$$
and sets $F^{-1}(1-q)$ as threshold for accepting $X_i$.

Now let $\tau$ be the stopping time implied by the algorithm, and for $q \in [0,1]$ define $R(q)=\EE[X\mid X\geq F^{-1}(1-q)]$ to be the expected value of random variable $X$ conditioned on $X$ exceeding threshold $F^{-1}(1-q)$. It can then be shown that 
\begin{align}
\E{X_{\tau}} &= \sum_{i=1}^n \Pr{\tau \geq i} \int_{1-x_{i-1}}^{1-x_i} f_i(q) R(q) q \;\mathrm{d}q \notag\\
&= \sum_{i=1}^n \rho \cdot \alpha_i \int_{1-x_{i-1}}^{1-x_i} f_i(q) R(q) q \;\mathrm{d}q = \frac{\rho}{n} \cdot  \E{\max\{X_1,\ldots,X_n\}},\label{eq:jose-ec17}
\end{align}
where the $i$th term in the sum can be viewed as the contribution of $X_i$ to the expectation $\E{X_{\tau}}$, and $\rho/n\geq \beta^{-1}$.

To simplify the presentation of our result, we first show that setting deterministic thresholds is also sufficient to achieve a guarantee of $\beta^{-1}$. In particular, for all $i\in[n]$ define $\bar{q}_i$ such that 
\[
	\bar{q}_i = \int_{1-x_{i-1}}^{1-x_i} f_i(q) q \;\mathrm{d}q\,.
\]

Let $\bar{\tau}$ be the stopping time of the algorithm that sets deterministic thresholds $F^{-1}(1-\bar{q}_i)$. The following lemma shows that we can also consider this algorithm.

\begin{lemma}\label{lem:deterministic}
	Let $X_1, \dots, X_n$ be i.i.d.\@ random variables drawn from a known distribution $F$. Then for all $i\in[n]$, we have 
	\[
	R(\bar{q}_i) \cdot \bar{q}_i \geq \int_{1-x_{i-1}}^{1-x_i} f_i(q) R(q) q \;\mathrm{d}q
	\]
\end{lemma}
\begin{proof}
	The left-hand side is $\EE[X | X > T_d] \cdot \Pr{X > T_d}$, where $T_d$ is the deterministic threshold $F^{-1}(1-\bar{q}_i)$ that corresponds to $\bar{q}_i$. The right-hand side is $\EE[X | X > T_r] \cdot \Pr{X > T_r}$, where $T_r$ is the randomized threshold that arises from first drawing $q \in [1-x_{i-1},1-x_i]$ with probability $f_i(q)$ and then setting threshold $F^{-1}(1-q)$. 
	We have chosen $\bar{q}_i$ so that $\Pr{X > T_d} = \Pr{X > T_r}$, it thus suffices to show that $\EE[X | X > T_d] \geq  \EE[X | X > T_r]$. 
	
	Let us prove the stronger statement that for all $t$, $\Pr{X > t \mid X > T_d} \geq \Pr{X > t \mid X > T_r}$. Indeed, if $t > T_d$ the claimed inequality becomes 
	\[
	\frac{\Pr{X > t}}{\Pr{X > T_d}} \geq \frac{\Pr{X > \max\{t,T_r\}}}{\Pr{X > T_r}}, 
	\]
	which holds because $\Pr{X > T_d} = \Pr{X > T_r}$ and $\Pr{X > t} \geq \Pr{X > \max\{t,T_r\}}$. On the other hand, if $t \leq T_d$, then $\Pr{X > t \mid X > T_d} = 1$ and the claimed inequality applies as well.
	\end{proof}

By applying \autoref{lem:deterministic} 
to~\eqref{eq:jose-ec17}, we obtain
\begin{align}
\E{X_{\bar{\tau}}} 
&= \sum_{i=1}^n \Pr{\bar{\tau} \geq i} \cdot R(\bar{q}_i) \cdot \bar{q}_i \notag\\
&= \sum_{i=1}^n \rho \cdot \alpha_i \cdot R(\bar{q}_i) \cdot \bar{q}_i \geq \frac{\rho}{n} \cdot  \E{\max\{X_1,\ldots,X_n\}}\,,\label{eq:jose-ec17-new}
\end{align}
where we used that conditioned on reaching step $i$ both $\tau$ and $\bar{\tau}$ accept $X_i$ with the same probability and so $\Pr{\bar{\tau} \geq i} = \Pr{\tau \geq i} = \rho \cdot \alpha_i$.

The algorithm that achieves the bound claimed in \autoref{thm:quadratic} starts by skipping some random variables until the acceptance probability $\bar{q}_i$ of algorithm $\bar{\tau}$ becomes sufficiently large, say $\delta/n$ where $0 < \delta < 1$ is some constant. Such a step exists for sufficiently large $n$, because if all acceptance probabilities $\bar{q}_1,\dots,\bar{q}_n$ were at most $1/n$, $\Pr{\bar{\tau}\leq n}$ would be at most $1-1/e\approx 0.632$ in the limit for $n\rightarrow\infty$, contradicting $\Pr{\bar{\tau}\leq n}\rightarrow\beta^{-1}\approx 0.745$~\citep{CorreaFHOV17}.

So assume that $\ell$ is the first such step with $\bar{q}_\ell\geq\delta_n$.
From then on, it uses the empirical distribution function of the samples to estimate the quantiles $\bar{q}_{\ell}, \dots, \bar{q}_n$ used by the optimal algorithm that knows the distribution on the remaining random variables. The algorithm then accepts random variable $X_i$ conditional on reaching it with probability $\tilde{q}_i$, where $\tilde{q}_i$ is its estimate of $\bar{q}_i$. More formally, when the original algorithm chooses threshold $\bar{T}_i = F^{-1}(1-\bar{q}_i)$ so that $1-F(\bar{T}_i) = \bar{q}_i$ our algorithm will choose $\tilde{T}_i = \tilde{F}^{-1}(1-\bar{q}_i)$ where $\tilde{F}$ is the empirical distribution function, and $\tilde{q}_i = 1-F(\tilde{T}_i)$. Denote the stopping time of this algorithm by $\tilde{\tau}_\delta$.

The reason why we skip the first few random variables is because the initial acceptance probability of the optimal algorithm that knows the distribution is of the order of $1/n^2$, therefore with $n^2$ samples we cannot get a reliable estimate of the corresponding quantile. 

We will lower bound the performance of our algorithm $\tilde{\tau}_{\delta}$ in terms of the performance of algorithm $\bar{\tau}$ through an intermediate algorithm, whose stopping time we denote by $\bar{\tau}_\delta$, that also skips the first few random variables but then uses the \emph{actual} quantiles $\bar{q}_{\ell}, \dots, \bar{q}_n$.

\begin{lemma}\label{lem:part1}
For every $\varepsilon>0$, there exists an $n_\varepsilon\in\N$ such that the following holds for all $n\geq n_\varepsilon$. Let $X_1, \dots, X_n$ be i.i.d.\@ random variables drawn from an unknown distribution $F$. Then, for any $\delta$ such that $0 < \delta < 1/2$, 
\[
	\E{X_{\bar{\tau}_\delta}} \geq \left(1-2\delta \right) \cdot \E{X_{\bar{\tau}}}.
\]
\end{lemma}

\begin{proof}[Proof of \autoref{lem:part1}]
Note that if $\ell = 1$ then there is nothing to show as $\bar{\tau}_\delta$ and $\bar{\tau}$ are identical. Otherwise, $\ell \geq 2$, $\bar{q}_\ell \geq \delta/n$, and $\bar{q}_{\ell-1} \leq \delta/n$, which implies that $1-x_{\ell-2} < \delta/n$.

The expected value achieved by the algorithm that skips the first few random variables until the acceptance probability becomes $\delta/n$ and then uses the actual quantiles $\bar{q}_{\ell}, \dots, \bar{q}_n$ is
\[
\E{X_{\bar{\tau}_\delta}} 
= \sum_{i = \ell}^{n} \Pr{\bar{\tau}_\delta \geq i} \cdot R(\bar{q}_i) \cdot \bar{q}_i \geq \sum_{i = \ell}^{n} \Pr{\bar{\tau}  \geq i} \cdot R(\bar{q}_i) \cdot \bar{q}_i\,.
\]
while
\[
\E{X_{\bar{\tau}}} = \sum_{i = 1}^{\ell-1} \Pr{\bar{\tau}  \geq i} \cdot R(\bar{q}_i) \cdot \bar{q}_i + \sum_{i = \ell}^{n} \Pr{\bar{\tau}  \geq i} \cdot R(\bar{q}_i) \cdot \bar{q}_i\,.
\]

Now observe that
\begin{align*}
\sum_{i=1}^{\ell-1} \Pr{\bar{\tau} \geq i} \cdot R(\bar{q}_i) \cdot \bar{q}_i 
&\leq 
\left(\sum_{i=1}^{\ell-2} \Pr{\bar{\tau} \geq i}  +  \Pr{\bar{\tau} \geq \ell-1} \right) \cdot R(\bar{q}_\ell) \cdot \bar{q}_\ell\\
&\leq 
2 \cdot \sum_{i=1}^{\ell-2} \Pr{\bar{\tau} \geq i}  \cdot R(\bar{q}_\ell) \cdot \bar{q}_\ell\\
&= 2 \cdot \rho \cdot \left(\sum_{i=1}^{\ell-2} \alpha_i\right) \cdot R(\bar{q}_{\ell-1}) \cdot \bar{q}_{\ell-1}\\
&= 2 \cdot \rho \cdot \left(\int_{0}^{1-x_{\ell-2}} (n-1)(1-q)^{n-2} \;\mathrm{d}q \right) \cdot R(\bar{q}_\ell) \cdot \bar{q}_\ell\\
&= 2 \cdot \rho \cdot (1-{x_{\ell-2}}^{n-1}) \cdot R(\bar{q}_\ell) \cdot \bar{q}_\ell\\
&\leq 2 \cdot \rho \cdot (1-(1-\delta/n)^{n-1}) \cdot R(\bar{q}_\ell) \cdot \bar{q}_\ell\\
&\le 2 \cdot \rho \cdot (1-e^{-\delta}) \cdot  R(\bar{q}_\ell) \bar{q}_\ell\\
&\le 
2 \cdot \rho \cdot \delta \cdot  R(\bar{q}_\ell) \bar{q}_\ell\,,
\end{align*}
where for the first inequality we used that $R(q) q = \int_{0}^{q} F^{-1}(1-r) \;\mathrm{d}r$ is monotone, for the second inequality we used that $\Pr{\bar{\tau} \geq \ell-1} \leq \Pr{\bar{\tau} \geq \ell-2}$, and for the third inequality we used that $x_{\ell-2} \geq 1-\delta/n$.

On the other hand,
\begin{align*}
\sum_{i = \ell}^{n} \Pr{\bar{\tau} \geq i} R(\bar{q}_i) \bar{q}_i 
&\geq 
\sum_{i = \ell}^{n} \Pr{\bar{\tau} \geq i} R(\bar{q}_\ell) \bar{q}_\ell\\
&= \rho \left(\sum_{i=\ell}^{n} \alpha_i \right) R(\bar{q}_\ell) \bar{q}_\ell \\
&= \rho \cdot \int_{1-x_{\ell-1}}^1 (n-1)(1-q)^{n-2} \;\mathrm{d}q \cdot R(\bar{q}_\ell) \bar{q}_\ell\\
&= \rho \cdot \left(1-\int_{0}^{1-x_{\ell-1}} (n-1)(1-q)^{n-2}\;\mathrm{d}q\right) \cdot R(\bar{q}_\ell) \bar{q}_\ell\\
&= \rho \left(1-\sum_{i=1}^{\ell-1} \alpha_i\right) \cdot R(\bar{q}_\ell) \bar{q}_\ell \\
&= \left(\rho - \sum_{i=1}^{\ell-1} \Pr{\bar{\tau} \geq i}\right) \cdot R(\bar{q}_\ell) \bar{q}_\ell\\
&\geq \rho \cdot (1-2\delta) \cdot R(\bar{q}_\ell) \bar{q}_\ell\,,
\end{align*}
where we again used the monotonicity of $R(q)q$ for the first inequality and the upper bound for $\sum_{i=1}^{\ell-1} \Pr{\bar{\tau} \geq i}$ derived above for the second inequality.

This shows that the ratio between the two terms is at most $2\delta/(1-2\delta)$, which in turn implies that
\[
\E{X_{\bar{\tau}}} \leq \left(1+\frac{2\delta}{1-2\delta}\right) \E{X_{\bar{\tau}_\delta}},
\]
and after rearranging shows the claim.
\end{proof}

\begin{lemma}\label{lem:part2}
For every $\varepsilon'>0$ and $\delta\in(0,1/2)$, there exist $\varepsilon''>$, $\gamma > 0$, and $n_{\varepsilon'}\in\N$ such that the following holds for all $n\geq n_{\varepsilon'}$. Let $X_1, \dots, X_n$ be i.i.d.\@ random variables drawn from an unknown distribution $F$. Then with $k \geq n^2 \ln(2/\varepsilon'')/ (2\gamma^2)$ samples it holds that
\[
\E{X_{\tilde{\tau}_\delta}} \geq (1-\varepsilon') \cdot \E{X_{\bar{\tau}_\delta}}\,.
\]
\end{lemma}

To prove this lemma, we make use of the following auxiliary lemma, which can be proven using the Dvoretzky--Kiefer--Wolfowitz inequality~\citep{DKW}.

\begin{lemma}\label{lem:aux}
Fix $\varepsilon'' > 0$ and $\gamma > 0$. Then, with $k\geq n^2 \ln(2/\varepsilon'')/ (2\gamma^2)$ samples,
\[
	\Pr{\max_i |\bar{q}_i-\tilde{q}_i| > \frac{\gamma}{n}} < \varepsilon''\,.
\]
\end{lemma}
\begin{proof}
Let $F$ denote the true underlying distribution, and let $\tilde{F}$ denote the empirical cumulative density function from $k$ samples. With $k \geq n^2 \ln(2/\varepsilon')/ (2\gamma^2)$ samples, the Dvoretzky--Kiefer--Wolfowitz inequality~\citep{DKW} states that
\[
\Pr{\sup_x |\tilde{F}(x) - F(x)| > \frac{\gamma}{n}} \leq 2\cdot e^{-2 k (\gamma/n)^2} \leq \varepsilon''\,.
\]

So with probability at least $1-\varepsilon''$ we have that for all pairs $\bar{q}_i$ and $\tilde{q}_i = 1-F(\tilde{T}_i)$,
\begin{align*}
\tilde{q}_i 
&= 1-F(\tilde{T}_i) 
 \le 1- \tilde{F}(\tilde{T}_i) + \frac{\gamma}{n}
 =  1 - (1-\bar{q}_i) + \frac{\gamma}{n}
 = \bar{q}_i + \frac{\gamma}{n}, \quad\text{and}\\[4pt]
\tilde{q}_i 
&= 1-F(\tilde{T}_i) 
\ge 1- \tilde{F}(\tilde{T}_i) - \frac{\gamma}{n}
 =  1 - (1-\bar{q}_i) - \frac{\gamma}{n}
  = \bar{q}_i - \frac{\gamma}{n}\,,
\end{align*}
as claimed.
\end{proof}

\begin{proof}[Proof of \autoref{lem:part2}]
	Given $\varepsilon' > 0$ and $\delta \in (0,1/2)$ choose $\varepsilon'', \varepsilon''', \gamma > 0$ such that $\gamma \leq \delta$, $\gamma \leq (1-\beta^{-1})\cdot \varepsilon''' \approx 0.255 \cdot \varepsilon'''$, and $(1-\varepsilon'')\cdot(1-\varepsilon''')\cdot(1-\gamma/\delta) \geq 1-\varepsilon'$.
	
	Towards relating $\EE[X_{\tilde{\tau}_\delta}]$ and $\EE[X_{\bar{\tau}_\delta}]$, denote by $\omega$ the event that $\max_i |\bar{q}_i-\tilde{q}_i| \leq \gamma/n$.
	Note that we can lower-bound the expected value obtained by our algorithm by only considering the case that $\omega$ occurs and then summing over all steps:
	\begin{align}
		\EE[X_{\tilde{\tau}_\delta}]\geq\Pr{\omega}\cdot\sum_{i=\ell+1}^n\Pr{\tilde{\tau}_\delta \geq i\mid\omega} &\cdot \Pr{X_i\geq \tilde{F}^{-1}(1-\bar{q}_i)\mid\omega \wedge \tilde{\tau}_\delta \geq i} \notag\\ 
		&\cdot\EE[X_i \mid \omega \wedge \tilde{\tau}_\delta \geq i \wedge X_i\geq \tilde{F}^{-1}(1-\bar{q}_i)]. \label{eq:empirical1}
	\end{align}	
	
	Fix some $i\in[n]$ with $i\geq \ell+1$.
	We first bound $\Pr{\tilde{\tau}_\delta \geq i\mid\omega}$ with respect to $\Pr{\bar{\tau}_\delta \geq i\mid\omega}=\Pr{\bar{\tau}_\delta \geq i}$.
	First note that, for any $j\leq\ell$, it holds that $\Pr{\bar{\tau}_\delta=j} = \Pr{\tilde{\tau}_\delta = j} = 0$.
	Furthermore note that, for all $\ell+1\le  j<i$ and conditioned on $\omega$, we have $|\bar{q}_j-\tilde{q}_j|\leq \gamma/n$.
	Thus, conditioned on $\omega$, the probability that precisely one of $\bar{\tau}_\delta,\tilde{\tau}_\delta$ is $j$ is bounded by $\gamma/n$.  
       By the union bound, we therefore have
	\begin{align*}
		\Pr{\tilde{\tau}_\delta \geq i\mid\omega}\geq\Pr{\bar{\tau}_\delta\geq i}-(i-\ell-1)\cdot\gamma/n\geq\Pr{\bar{\tau}_\delta\geq i}-i\cdot\gamma/n.
	\end{align*}
		
	Now note that $\Pr{\bar{\tau}_\delta \geq i} \geq \Pr{\bar{\tau}_\delta > n}$, \ie the probability that $\bar{\tau}_\delta$ stops at step $i$ or later is lower bounded by the probability that $\bar{\tau}_\delta$ does not stop at all. Moreover, $\Pr{\bar{\tau}_\delta > n} \geq \Pr{\tau > n}$, \ie the probability that $\bar{\tau}_\delta$ does not stop at all is at least the probability that $\tau$ does not stop at all. Since $\Pr{\tau > n} = 1-\beta^{-1} \approx 0.255$, if we choose $\gamma$ such that $\gamma \leq (1-\beta^{-1}) \cdot \varepsilon'''$, then
	\begin{align}\label{eq:empirical2}
		\Pr{\tilde{\tau}_\delta \geq i\mid\omega}\geq  (1-\varepsilon''') \cdot \Pr{\bar{\tau}_\delta \geq i}\,.
	\end{align}

	Next, we bound $\Pr{X_i\geq \tilde{F}^{-1}(1-\bar{q}_i)\mid\omega \wedge \tilde{\tau}_\delta \geq i}\cdot\EE[X_i \mid \omega \wedge \tilde{\tau}_\delta \geq i \wedge X_i\geq \tilde{F}^{-1}(1-\bar{q}_i)]$, that is, the value extracted from $X_i$ conditioned on $\omega$ and arriving in step $i$. We obtain
	\begin{align}
		&\Pr{X_i\geq \tilde{F}^{-1}(1-\bar{q}_i)\;\big\vert\;\omega \wedge \tilde{\tau}_\delta \geq i}\cdot\EE[X_i \mid \omega \wedge \tilde{\tau}_\delta \geq i \wedge X_i\geq \tilde{F}^{-1}(1-\bar{q}_i)]\nonumber	\\
		=&\,\EE\left[\int_{\tilde{q}_i}^1 F^{-1}(q)\;\mathrm{d}q\;\Big\vert\;\omega\wedge \tilde{\tau}_\delta \geq i\right]\nonumber\\
		=&\,\EE\left[\int_{\bar{q}_i}^1 F^{-1}(q)\;\mathrm{d}q-\int_{\bar{q}_i}^{\tilde{q}_i} F^{-1}(q)\;\mathrm{d}q\;\Big\vert\;\omega\wedge \tilde{\tau}_\delta \geq i\right]\nonumber\\
		\geq&\,\EE\left[\left(1-\frac{\gamma}{\delta}\right)\cdot\int_{\bar{q}_i}^1 F^{-1}(q)\;\mathrm{d}q\;\Big\vert\;\omega\wedge \tilde{\tau}_\delta \geq i\right]\nonumber\\
		=&\,\left(1-\frac{\gamma}{\delta}\right)\cdot\int_{\bar{q}_i}^1 F^{-1}(q)\;\mathrm{d}q\nonumber\\
		=&\,\left(1-\frac{\gamma}{\delta}\right)\cdot\Pr{X_i\geq F^{-1}(1-\bar{q}_i)\mid\bar{\tau}_\delta\geq i}\cdot\EE[X_i \mid \bar{\tau}_\delta \geq i \wedge X_i\geq F^{-1}(1-\bar{q}_i)], \label{eq:empirical3}
	\end{align}
	where in the third-to-last step we used that $F^{-1}(q)$ is monotonically increasing in $q$. Note that the inequality holds independently of how $\bar{q}_i$ and $\tilde{q}_i$ are ordered.
	
	We now substitute the two bounds given in~\eqref{eq:empirical2} and~\eqref{eq:empirical3} into~\eqref{eq:empirical1} and apply \autoref{lem:aux} to obtain
	\[
		\EE[X_{\tilde{\tau}_\delta}]
		\geq \Pr{\omega} \cdot \left(1-\varepsilon'''\right) \cdot (1-\frac{\gamma}{\delta}) \cdot \EE[X_{\bar{\tau}_\delta}]
		\geq (1-\varepsilon'') \cdot (1-\varepsilon''') \cdot (1-\frac{\gamma}{\delta}) \cdot \EE[X_{\bar{\tau}_\delta}]
		\geq (1-\varepsilon') \cdot \EE[X_{\bar{\tau}_\delta}]\,,
	\]
as claimed.
\end{proof}

We are now ready to prove the theorem.

\begin{proof}[Proof of \autoref{thm:quadratic}]
First choose $\varepsilon'>0$ and $\delta\in(0,1/2)$ such that $(1-\varepsilon')(1-2\delta)\geq 1-\varepsilon$. Then, by combining \autoref{lem:part1} with \autoref{lem:part2}, we obtain that there exist $\varepsilon'', \gamma > 0$ such that for sufficiently large $n$, with $k \geq n^2 \ln(2/\varepsilon'')/ (2\gamma^2) =O(n^2)$ samples,
\begin{align*}
	\EE[X_{\tilde{\tau}_\delta}]&\geq(1-\varepsilon')(1-\delta)\cdot\EE[X_{\bar{\tau}}]\\
	&\geq (1-\varepsilon)\cdot\EE[X_{\bar{\tau}}]\\
	&\geq (\beta^{-1}-\varepsilon)\cdot\EE[\max\{X_1,\dots,X_n\}],
\end{align*}
where the last step follows from~\eqref{eq:jose-ec17-new}.
\end{proof}

\begin{appendices}

\section{Proof of \autoref{thm:secretary}}
\label{app:secretary}

Let $\tau$ be the stopping time corresponding to the optimal stopping rule for the secretary problem, which rejects a certain fraction of the random variables and uses their maximum as a threshold for the remaining ones. Since $X_1,X_2,\dots,X_n$ are drawn independently from the same distribution, we can assume that their realizations are obtained by independently drawing~$n$ values from the distribution and then ordering them according to a random permutation~$\pi$. Denoting the density of the distribution from which $X_1,\dots,X_n$ are drawn by~$f$,
	\begin{align*}
		\mathbb{E}[X_\tau] & = \int_0^\infty\dots\int_0^\infty\prod_{i=1}^n f(v_i)\cdot\EE_{\pi}[v_{\pi(\tau)}] \;\mathrm{d}v_1\cdots\mathrm{d}v_n \\
		& \geq \int_0^\infty\dots\int_0^\infty\prod_{i=1}^n f(v_i)\cdot\PP_{\pi}[v_{\pi(\tau)}=\max\{v_1,\dots,v_n\}] \cdot \max\{v_1,\dots,v_n\} \;\mathrm{d}v_1\cdots\mathrm{d}v_n \\
		&\geq \frac{1}{e} \cdot\int_0^\infty\dots\int_0^\infty\prod_{i=1}^n f(v_i)\cdot\max\{v_1,\dots,v_n\}  \;\mathrm{d}v_1\cdots\mathrm{d}v_n \\
		&= \frac{1}{e} \cdot \E{\max\{X_1,\dots,X_n\}},
	\end{align*}
	where the second inequality holds because the values $v_1,\dots,v_n$ have been randomly ordered and $\tau$ is thus guaranteed to select $\max\{v_1,\dots,v_n\}$ with probability at least $1/e$ for any realization~\citep{Ferg89a}. This proves the claim.

\section{A $\mathbf{1/2}$-approximation with $n-1$ samples}
\label{app:one-half}

We formalize the discussion in \autoref{sec:warm-up}. We show that if the stopping rule has access to $n-1$ samples, then we can simply take the maximum of these samples as a single, non-adaptive threshold for all random variables to obtain a factor $1/2$-approximation. 

\begin{theorem}\label{thm:single-threshold}
Let $X_1,X_2,\dots,X_n$ be i.i.d.\@ draws from an unknown distribution $F$. Then there exists a $(n-1,n)$-stopping-rule with stopping time $\tau$ such that
\[
	\E{X_\tau} \geq \frac{1}{2} \cdot  \E{\max\{X_1, \dots, X_n\}}.
\]
\end{theorem}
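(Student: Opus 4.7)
My plan is to analyze exactly the algorithm informally described in Section~\ref{sec:warm-up}: draw samples $S_1, \dots, S_{n-1}$ from $F$, let $T = \max\{S_1, \dots, S_{n-1}\}$, and set $\tau$ to be the first index $i$ with $X_i \geq T$ (or $\infty$ if none exists). As usual we may assume $F$ is continuous and ignore ties. The proof then has two independent parts: control the conditional expectation given stopping, and control the probability of stopping at all.

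First I would write
\[
\E{X_\tau} \;=\; \sum_{i=1}^n \Pr{\tau = i}\cdot \E{X_i \bmid \tau = i}.
\]
The event $\{\tau = i\}$ is exactly the event that $X_j < T$ for every $j<i$ and $X_i \geq T$, which is equivalent to
\[
X_i \;=\; \max\{X_1, \dots, X_i, S_1, \dots, S_{n-1}\}.
\]
Since the $n+i-1$ variables $X_1, \dots, X_i, S_1, \dots, S_{n-1}$ are i.i.d.\ draws from $F$, conditioning on the identity of the maximum leaves the distribution of that maximum itself unchanged. Hence
\[
\E{X_i \bmid \tau=i} \;=\; \E{\max\{X_1, \dots, X_i, S_1, \dots, S_{n-1}\}} \;\geq\; \E{\max\{X_1, \dots, X_n\}},
\]
where the inequality uses $n+i-1 \geq n$ and monotonicity of the expected maximum in the number of i.i.d.\ samples.

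Second, I would compute $\Pr{\tau < \infty}$. By definition $\tau < \infty$ if and only if $\max\{X_1, \dots, X_n\} > \max\{S_1, \dots, S_{n-1}\}$, which is the event that the maximum of the $2n-1$ i.i.d.\ variables $X_1, \dots, X_n, S_1, \dots, S_{n-1}$ lies among the $X_i$'s. By exchangeability this happens with probability $n/(2n-1) \geq 1/2$. Combining the two bounds,
\[
\E{X_\tau} \;=\; \sum_{i=1}^n \Pr{\tau=i}\cdot \E{X_i \bmid \tau=i} \;\geq\; \Pr{\tau<\infty}\cdot \E{\max\{X_1,\dots,X_n\}} \;\geq\; \tfrac{1}{2}\,\E{\max\{X_1,\dots,X_n\}}.
\]

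There is no real obstacle here; the only subtlety is the symmetry argument that lets us identify $\E{X_i \mid \tau = i}$ with the expected maximum of $n+i-1$ i.i.d.\ draws (rather than with something smaller). A brief remark at the end can note that the bound obtained is in fact $n/(2n-1) = 1/2 + 1/(4n-2)$, matching the slightly stronger claim sketched in the main text, and that one cannot improve beyond a $1/2$-type guarantee using any single non-adaptive threshold, even one chosen with full knowledge of $F$ (as the appendix will also discuss).
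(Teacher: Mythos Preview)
Your overall strategy matches the paper's, and your computation of $\Pr{\tau<\infty}=n/(2n-1)$ via a one-line exchangeability argument is in fact cleaner than the paper's route, which computes each $\Pr{\tau=i}=\frac{1}{n-1+i}\cdot\frac{n-1}{n-2+i}$ separately and telescopes. There is, however, a real gap in the first part.

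The claimed equivalence $\{\tau=i\}\Leftrightarrow\{X_i=\max\{X_1,\dots,X_i,S_1,\dots,S_{n-1}\}\}$ is false: only the forward implication holds. For a counterexample take $n=i=2$ with $S_1=1$, $X_1=2$, $X_2=3$; then $X_2$ is the maximum of the three variables, but $\tau=1$ since already $X_1\ge T$. The correct characterization, which the paper uses, is that $\tau=i$ holds iff $X_i$ is the maximum of $\{X_1,\dots,X_i,S_1,\dots,S_{n-1}\}$ \emph{and} the second largest among these $n+i-1$ variables lies among $S_1,\dots,S_{n-1}$. This is still an event depending only on the relative ranks of $n+i-1$ i.i.d.\ variables, so by exchangeability conditioning on it does not change the distribution of the maximum, and your conclusion $\E{X_i\mid\tau=i}=\E{\text{max of }n+i-1\text{ i.i.d.\ draws}}\ge\E{\max\{X_1,\dots,X_n\}}$ survives. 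You just need to state the event correctly and extend the symmetry argument from ``identity of the maximum'' to ``identity of the maximum and location of the second maximum''; once that is done the proof goes through exactly as you wrote it.
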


To prove \autoref{thm:single-threshold} we will analyze a slight variation of the algorithm described above, \autoref{alg:single-threshold}, which only uses the maximum of $n-1$ samples as a threshold for the first $n-1$ random variables and stops on the $n$th random variable with certainty. The advantage of this is that it becomes even clearer when and why our analysis is tight. 

\begin{algorithm}[t]
\SetInd{0.5em}{1em}
\DontPrintSemicolon
\KwData{Sequence of i.i.d.~random variables $X_1, \dots, X_n$ sampled from an unknown distribution $F$, sample access to $F$}
\KwResult{Stopping time $\tau$}
  $\tau \longleftarrow n$\;
  $S_1, \dots, S_{n-1} \longleftarrow \text{$n-1$ samples from $F$}$\;
  \For{$t = 1, \dots, n-1$}{
    \If{$X_t \geq \max\{S_1, \dots, S_{n-1}\}$}{
  	$\tau \longleftarrow t$\;
	Break\;
    }
    }
    \Return $\tau$
\caption{Single threshold algorithm}
\label{alg:single-threshold}
\end{algorithm}

\begin{proof}[Proof of \autoref{thm:single-threshold}]
The expected value achieved by \autoref{alg:single-threshold} is the sum over all time steps $i = 1, \dots, n$ of the product of the probability of stopping at this time step and the expected value of the random variable conditioned on being above the threshold
\begin{align}
	\E{X_\tau} 
	&= \sum_{i=1}^{n-1} \biggl( \E{X_i \mid \tau = i} \cdot \Pr{\tau = i}  \biggr) + \E{X_n} \cdot \Pr{\tau = n} \notag\\
	&\geq \sum_{i=1}^{n-1} \biggl( \E{X_i \mid \tau = i} \cdot \Pr{\tau = i}  \biggr).
	\label{eq:single-threshold-1}
\end{align}

We stop at time step $i$ if the maximum among the $n-1$ samples and the first $i$ random variables happens to be the $i$th random variable, and if, conditioned on this, the second maximum is among the $n-1$ samples and not the other $i-1$ random variables. Hence,
\begin{align*}
	\Pr{\tau = i} &= \frac{1}{n-1+i} \cdot \frac{n-1}{n-2+i}
\end{align*}

Summing this over all $i$ from $1$ to $n-1$ shows that the probability of stopping at one of the first $n-1$ random variables is precisely
\begin{align}
	\sum_{i=1}^{n-1} \Pr{\tau = i} 
	&= \sum_{i=1}^{n-1} \frac{1}{n-1+i} \cdot \frac{n-1}{n-2+i} = \frac{1}{2}.
	\label{eq:single-threshold-2}
\end{align}

We conclude the proof by showing that for all $i = 1, \dots, n-1$ the conditional expectation $\E{X_i \mid \tau = i}$ is at least $\E{\max\{X_1, \dots, X_n\}}$.
Let $T = \max\{S_1, \dots, S_n\}$. The algorithm stops at time step $i$ if $X_i \geq T > \max\{X_1, \dots, X_{i-1}\}$. 
So under this event $X_i$ is the maximum of $n-1+i$ random variables. And so
\begin{align}
	\E{X_i \mid \tau = i} 
	&= \E{\text{max of $n-1+i$ i.i.d.~RVs}} \geq \E{\max\{X_1, \dots, X_n\}}.
	\label{eq:single-threshold-3}
\end{align}

Substituting~\eqref{eq:single-threshold-2} and~\eqref{eq:single-threshold-3} into~\eqref{eq:single-threshold-1} completes the proof.
\end{proof}

As we have argued in the proof of \autoref{thm:single-threshold}, the probability that \autoref{alg:single-threshold} stops on one of the first $n-1$ variables is precisely $1/2$. The two potentially lossy steps are that we dropped the contribution from the final random variable, and that we lower bounded the contribution from each of the first $n-1$ random variables by $\E{\max\{X_1, \dots, X_n\}}$.

It turns out that both of the potentially lossy steps are in fact lossless in the limit as $n \rightarrow \infty$ if $F$ is the exponential distribution.

\begin{proposition}
Let $X_1, \dots, X_n$ be drawn independently from the distribution with $F(x)=1-e^{-x}$. Then for the stopping time $\tau$ determined by \autoref{alg:single-threshold},
\begin{align*}
	\lim_{n \rightarrow \infty} \frac{\E{X_\tau}}{\E{\max\{X_1, \dots, X_n\}}} = \frac{1}{2}.
\end{align*}
\end{proposition}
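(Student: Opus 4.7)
The plan is to make both of the two potentially-lossy steps in the proof of Theorem~\ref{thm:single-threshold} explicit for $F=1-e^{-x}$, and to observe that each contributes only an $O(1)$ additive error against a denominator that grows like $\log n$. For the standard exponential, the expected maximum of $k$ i.i.d.\ draws equals the harmonic number $H_k=\sum_{j=1}^{k}1/j$, so in particular $\E{\max\{X_1,\dots,X_n\}}=H_n=\ln n+\gamma+o(1)$.

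For the numerator I plug the exponential-specific conditional expectations into the decomposition from the proof of Theorem~\ref{thm:single-threshold}. Conditional on $\tau=i$ for $i\in\{1,\dots,n-1\}$, the value $X_i$ equals the maximum of the $n-1$ samples together with $X_1,\dots,X_i$, a total of $n-1+i$ i.i.d.\ exponentials, so by exchangeability $\E{X_i\mid\tau=i}=H_{n-1+i}$. Conditional on $\tau=n$, the value $X_n$ is independent of everything the algorithm observed at earlier steps, so $\E{X_n\mid\tau=n}=1$. Combined with $\Pr{\tau=i}=(n-1)/((n-1+i)(n-2+i))$ for $i<n$ and $\Pr{\tau=n}=1/2$, this yields
\begin{align*}
	\E{X_\tau} \;=\; \sum_{i=1}^{n-1}\frac{n-1}{(n-1+i)(n-2+i)}\cdot H_{n-1+i}\;+\;\tfrac12.
\end{align*}

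To pass to the limit I use the uniform bound $H_n\le H_{n-1+i}\le H_{2n-2}=H_n+\ln 2+o(1)$ for $i=1,\dots,n-1$. Splitting $H_{n-1+i}=H_n+(H_{n-1+i}-H_n)$ and recalling $\sum_{i=1}^{n-1}\Pr{\tau=i}=1/2$ from equation~\eqref{eq:single-threshold-2} yields $\E{X_\tau}=\tfrac12 H_n+O(1)$, and dividing by $H_n\to\infty$ gives the claimed limit of $1/2$.

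The computation is essentially routine and there is no real obstacle. The only mild subtlety is to verify that conditioning on $\tau=i$ makes $X_i$ exactly distributed as the maximum of $n-1+i$ i.i.d.\ exponentials (not something smaller), which follows from exchangeability of the $n-1$ samples together with the first $i$ arrivals and is already what underlies the matching inequality in the proof of Theorem~\ref{thm:single-threshold}. It is instructive that the two slack sources identified after that proof (the dropped contribution from $X_n$ and the loosening of $H_{n-1+i}$ down to $H_n$) together contribute only $O(1)$, while the probability budget $1/2$ in equation~\eqref{eq:single-threshold-2} is the true bottleneck; this is precisely why the ratio converges to exactly $1/2$ rather than anything larger.
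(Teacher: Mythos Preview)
Your argument is correct. You and the paper share the same starting decomposition,
\[
\E{X_\tau}=\sum_{i=1}^{n-1}\frac{n-1}{(n-1+i)(n-2+i)}\,H_{n-1+i}+\tfrac12,
\]
together with $\E{\max_i X_i}=H_n$, and the justification that $\E{X_i\mid\tau=i}=H_{n-1+i}$ via exchangeability is exactly the one underlying the paper's inequality~\eqref{eq:single-threshold-3}.

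Where the two diverge is in how the sum is handled. The paper evaluates it in closed form, obtaining $\E{X_\tau}=\psi^{(0)}(n)-\tfrac12 H_{2n-2}+\gamma+1$ via the digamma function, and then passes to the limit. You instead bound the residual $H_{n-1+i}-H_n\in[0,\ln 2+o(1)]$ uniformly, use $\sum_{i<n}\Pr{\tau=i}=\tfrac12$ to extract the main term $\tfrac12 H_n$, and conclude $\E{X_\tau}=\tfrac12 H_n+O(1)$. Your route is more elementary and entirely sufficient for the limit statement; the paper's route additionally yields the exact finite-$n$ expression, at the cost of the ``tedious calculations'' it alludes to.
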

\begin{proof}
It is a well-known fact that the maximum of $n$ independent, exponentially distributed random variables $X_1, \dots, X_n$ is equal to the $n$th harmonic number, \ie that
\begin{align*}
	\E{\max\{X_1, \dots, X_n\}} = H_n .
\end{align*}

As we have argued in the proof of \autoref{thm:single-threshold}, the expected value obtained by \autoref{alg:single-threshold} can be written as
\begin{align*}
	\E{X_\tau} 
	&= \sum_{i=1}^{n-1} \biggl( H_{n-1+i} \cdot \frac{1}{n-1+i} \cdot \frac{n-1}{n-2+i}  \biggr) + \frac{1}{2}.
\end{align*}
Tedious calculations allow to express the expected value via the digamma function $\psi^{(0)}$ and the Euler-Mascheroni constant $\gamma$ as
\begin{align*}
	\E{X_\tau} &= \psi^{(0)}(n) - \frac{1}{2} H_{2n-2} + \gamma + 1 ,
\end{align*}
which can be used to show that
\begin{align*}
\lim_{n \rightarrow \infty} \frac{\E{X_\tau}}{\E{\max\{X_1, \dots, X_n\}}} = \lim_{n \rightarrow \infty} \frac{\psi^{(0)}(n) - \frac{1}{2} H_{2n-2} + \gamma + 1}{H_n} = \frac{1}{2} .
\end{align*}
This proves the claim.
\end{proof}

\end{appendices}

\section*{Acknowledgments}
Discussions with F\'abio Botler and valuable feedback from the anonymous referees are gratefully acknowledged. An extended abstract announcing the main results appeared in the Proceedings of the 20th ACM Conference on Economics and Computation. The work was funded in part by an Amazon Research Award, by CONICYT grants AFB-170001 and FONDECYT-190043, by the DAAD within the PRIME program using funds of BMBF and the EU Marie Curie Actions, by the European Research Council under Grant Agreement No.~691672, and by EPSRC grant~EP/T015187/1.  Part of the work was done while the second author was at Google Research in Z\"urich, Switzerland, and while the fourth author was at \'Ecole Normale Sup\'erieure in Paris, France, and Technische Universit\"at M\"unchen, Germany.

\bibliographystyle{abbrvnat}
\bibliography{abb,bibliography}

\end{document}